\title{\textbf{Decoding binary Reed-Muller codes via Groebner bases}}
\author{Harinaivo ANDRIATAHINY$^{(1)}$\\e-mail : hariandriatahiny@gmail.com\\Jean Jacques Ferdinand RANDRIAMIARAMPANAHY$^{(2)}$\\e-mail : randriamiferdinand@gmail.com\\Toussaint Joseph RABEHERIMANANA$^{(3)}$\\e-mail : rabeherimanana.toussaint@yahoo.fr\\\\$^{1,2,3}$Mention : Mathematics and Computer Science,\\Domain : Sciences and Technologies,\\University of Antananarivo, Madagascar}
\swapnumbers\theoremstyle{plain}
\theoremstyle{plain}
\newtheorem{thm}{Theorem}[section]
\newtheorem{prop}[thm]{Proposition}
\newtheorem{cor}[thm]{Corollary}
\DeclareMathOperator{\rad}{rad}
\DeclareMathOperator{\lcm}{lcm}
\DeclareMathOperator{\lt}{\mathrm{lt}}
\DeclareMathOperator{\lc}{\mathrm{lc}}
\DeclareMathOperator{\lm}{\mathrm{lm}}
\theoremstyle{definition}
\newtheorem{ex}[thm]{Example}
\newtheorem{rem}[thm]{Remark}
\DeclareMathOperator{\card}{card}
\begin{document}

\maketitle

\begin{abstract}
The binary Reed-Muller codes can be characterized as the radical powers of a modular algebra. We use the Groebner bases to decode these codes.
\end{abstract}
Keywords : Reed-Muller code, Jennings basis, Groebner basis, decoding.\\
MSC 2010 : 13P10, 94B05, 12E05, 94B35

\section{Introduction}
S.D. Berman \cite{Berman} showed that the binary Reed-Muller codes may be described as the powers of the radical of the modular group algebra $\mathbb{F}_2[G]$, where $G$ is an elementary abelian 2-group. $\mathbb{F}_2[G]$ is isomorphic to the quotient ring $A=\mathbb{F}_2[X_1,\dots, X_m]\slash(X_1^2-1,\dots,X_m^2-1)$. The Jennings basis of $M^l$, where $M$ is the radical of $A$, is a linear basis of $M^l$ over $\mathbb{F}_2$. We use the fact that from the Jennings basis of $M^l$, one can construct a basis for the ideal $M^l$, and we utilize  the properties of the Groebner basis to establish a decoding algorithm for the binary Reed-Muller codes.  

\section{A division algorithm}
Let us start with some definitions about monomial orderings and division in a multivariate polynomial ring. Details can be found in \cite{clo}.\\
Let $k$ be an arbitrary field. A monomial in the $m$ variables $X_1,\dots,X_m$ is a product of the form $X_1^{\alpha_1}X_2^{\alpha_2}\dots X_m^{\alpha_m}$, where all the exponents $\alpha_1,\dots,\alpha_m$ are non negative integers. Let $\alpha = (\alpha_1,\dots,\alpha_m)$ be an m-tuple of non negative integers. We set $X^\alpha = X_1^{\alpha_1}X_2^{\alpha_2}\dots X_m^{\alpha_m}$. When $\alpha = (0,\dots,0)$, note that $X^\alpha = 1$. We also let $\mid\alpha \mid = \alpha_1+\dots+\alpha_m$ denote the total degree of the monomial $X^\alpha$.\\
A polynomial $f$ in $X_1,\dots,X_m$ with coefficients in $k$ is a finite linear combination with coefficients in $k$ of monomials. A polynomial $f$ will be written in the form $f = \sum_{\alpha}^{}a_\alpha X^\alpha$, $a_\alpha\in k$, where the sum is over a finite number of m-tuples $\alpha = (\alpha_1,\dots,\alpha_m)$.\\
$k[X_1,\dots,X_m]$ denotes the ring of all polynomials in $X_1,\dots,X_m$ with coefficients in $k$. A monomial ordering on $k[X_1,\dots,X_m]$ is any relation $>$ on $\mathbb{N}^m$, or equivalently, any relation on the set of monomials $X^\alpha$, $\alpha\in\mathbb{N}^m$, satisfying : 
\begin{itemize}
\item[(i)] $>$ is a total ordering on $\mathbb{N}^m$,
\item[(ii)] if $\alpha>\beta$ and $\gamma\in\mathbb{N}^m$, then $\alpha+\gamma>\beta+\gamma$,
\item[(iii)] $>$ is a well-ordering on $\mathbb{N}^m$.
\end{itemize}
A first example is the lexicographic order. Let $\alpha = (\alpha_1,\dots,\alpha_m)$ and $\beta = (\beta_1,\dots,\beta_m)\in\mathbb{N}^m$. We say $\alpha>_{lex}\beta$ if, in the vector difference $\alpha - \beta\in\mathbb{Z}^m$, the left-most nonzero entry is positive. We will write $X^\alpha>_{lex} X^\beta$ if $\alpha>_{lex}\beta$.\\
A second example is the graded lexicographic order. Let $\alpha, \beta\in \mathbb{N}^m$, we say $\alpha>_{grlex}\beta$ if $\mid\alpha\mid = \sum_{i=1}^m\alpha_i>\mid\beta\mid = \sum_{i=1}^m\beta_i$, or $\mid\alpha\mid=\mid\beta\mid$ and $\alpha>_{lex}\beta$.\\
Let $f=\sum_\alpha a_\alpha X^\alpha$ be a nonzero polynomial in $k[X_1,\dots,X_m]$ and let $>$ be a monomial order. The multidegree of $f$ is $multideg(f)= \max(\alpha\in\mathbb{N}^m/ a_\alpha\neq 0)$, the maximum is taken with respect to $>$. The leading coefficient of $f$ is $\lc(f)=a_{multideg(f)}\in k$. The leading monomial of $f$ is $\lm(f)=X^{multideg(f)}$.  The leading term of $f$ is $\lt(f)=\lc(f).\lm(f)$.
\begin{thm}
Fix a monomial order $>$ on $\mathbb{N}^m$, and let $F=(f_1,\dots,f_s)$ be an ordered s-tuple of polynomials in $k[X_1,\dots,X_m]$. Then every $f\in k[X_1,\dots,X_m]$ can be written as $f=a_1f_1+\dots+a_sf_s+r$, where $a_i, r\in k[X_1,\dots,X_m]$, and either $r=0$ or $r$ is a linear combination, with coefficients in $k$, of monomials, none of which is divisible by any of $\lt(f_1),\dots, \lt(f_s)$. We will call $r$ a remainder of $f$ on division by $F$. Furthermore, if $a_if_i\neq 0$, then we have $multideg(f)\geq multideg(a_if_i)$.
\end{thm}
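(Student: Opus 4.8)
The plan is to prove this constructively by exhibiting an explicit division algorithm and then verifying that it terminates with output having the stated properties. First I would set up the algorithm with accumulator variables $a_1,\dots,a_s$ (each initialized to $0$), a remainder $r$ (initialized to $0$), and an intermediate dividend $p$ (initialized to $f$). At each stage I examine $\lt(p)$: if some $\lt(f_i)$ divides $\lt(p)$, I choose the least such index $i$, add the quotient $\lt(p)/\lt(f_i)$ to $a_i$, and replace $p$ by $p-(\lt(p)/\lt(f_i))f_i$ (a ``division step''); otherwise none of the $\lt(f_i)$ divides $\lt(p)$, so I transfer $\lt(p)$ into the remainder by adding it to $r$ and subtracting it from $p$ (a ``remainder step''). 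The algorithm halts when $p=0$.

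Next I would establish the loop invariant $f=a_1f_1+\dots+a_sf_s+p+r$. It holds trivially at initialization, and it is preserved by each step type, since in both cases the quantity subtracted from $p$ equals the quantity simultaneously added to $a_if_i$ or to $r$. Because the algorithm halts precisely when $p=0$, the invariant then yields the desired decomposition $f=a_1f_1+\dots+a_sf_s+r$.

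The key point, and the step I expect to be the main obstacle, is termination. Here I would argue that each step strictly decreases $multideg(p)$ with respect to $>$. In a division step the leading term of $(\lt(p)/\lt(f_i))f_i$ is exactly $\lt(p)$, so these leading terms cancel and $multideg(p)$ strictly drops; in a remainder step $\lt(p)$ is simply deleted, again lowering $multideg(p)$. Since $>$ is a well-ordering by property (iii), no infinite strictly decreasing chain of multidegrees can exist, so the procedure terminates after finitely many steps. This is where the well-ordering hypothesis is indispensable.

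Finally I would verify the two output conditions. For the remainder: every monomial placed in $r$ originates from a remainder step and is therefore divisible by no $\lt(f_i)$, so $r$ is either $0$ or a $k$-linear combination of such monomials. For the degree bound: each monomial $\mu$ contributing to $a_i$ is of the form $\lt(p)/\lt(f_i)$ at some stage, whence $multideg(\mu)+multideg(f_i)=multideg(\lt(p))\leq multideg(f)$, since $multideg(p)$ never exceeds its initial value $multideg(f)$. Using compatibility of $>$ with addition (property (ii)), it follows that $multideg(\mu f_i)\leq multideg(f)$ for every such $\mu$, and therefore $multideg(a_if_i)\leq multideg(f)$ whenever $a_if_i\neq0$, as claimed.
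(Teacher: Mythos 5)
Your proof is correct and complete: the loop invariant, the termination argument via the well-ordering property (iii), and the degree bound via $\mathrm{multideg}(\mu f_i)=\mathrm{multideg}(\mu)+\mathrm{multideg}(f_i)\leq \mathrm{multideg}(f)$ are all handled properly. The paper itself states this theorem without proof, deferring to Cox--Little--O'Shea, and your argument is precisely the standard division-algorithm proof given there, so there is nothing to reconcile.
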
 
\begin{rem}\label{rem1}
The operation of computing remainders on division by $F=(f_1,\dots, f_s)$ is linear over $k$. That is, if the remainder on division of $g_i$ by $F$ is $r_i$, $i=1, 2$, then, for any $c_1, c_2\in k$, the remainder on division of $c_1g_1+c_2g_2$ is $c_1r_1+c_2r_2$.
\end{rem}
\section{Groebner bases}
In this section, we recall some basic properties of Groebner bases. Details and proofs can be found in \cite{clo}.\\
Let $I\subseteq k[X_1,\dots,X_m]$ be an ideal other than $\{0\}$. We denote by $\lt(I)$ the set of leading terms of elements of $I$. Thus, $\lt(I)=\{cX^\alpha/ \text{there exists}\  f\in I\ \  \text{with}\ \lt(f)=cX^\alpha\}$. We denote by $\langle \lt(I)\rangle$ the ideal of $k[X_1,\dots,X_m]$ generated by the elements of $\lt(I)$.
\begin{thm}[Hilbert Basis Theorem]
Every ideal $I\subseteq k[X_1,\dots,X_m]$ has a finite generating set. That is, $I=\langle g_1,\dots, g_t\rangle$ for some $g_1,\dots, g_t\in I$.
\end{thm}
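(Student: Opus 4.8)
The plan is to reduce the statement to the finite generation of monomial ideals and then bootstrap back to $I$ by means of the division algorithm proved above. If $I=\{0\}$ the claim is trivial, so I assume $I\neq\{0\}$ and pass to the ideal of leading terms $\langle\lt(I)\rangle$. The first observation is that this is a monomial ideal: since the leading coefficients are invertible in the field $k$, it is generated by the leading monomials $\lm(f)$ as $f$ ranges over the nonzero elements of $I$.

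The key step is to show that every monomial ideal admits a finite set of monomial generators; this is Dickson's Lemma. I would prove it by induction on the number of variables $m$, using the well-ordering property (iii) of the monomial order to isolate, at each stage, a least exponent in the last variable and thereby peel off one variable, so that an arbitrary monomial ideal is controlled by finitely many monomial ideals in fewer variables. Applied to $\langle\lt(I)\rangle$, this yields $\langle\lt(I)\rangle=\langle\lm(g_1),\dots,\lm(g_t)\rangle$ for finitely many $g_1,\dots,g_t\in I$, where I may arrange that each generator is the genuine leading monomial of some element of $I$.

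I then claim $I=\langle g_1,\dots,g_t\rangle$. The inclusion $\supseteq$ is immediate since each $g_i\in I$. For $\subseteq$, take $f\in I$ and divide it by the ordered tuple $(g_1,\dots,g_t)$, obtaining $f=a_1g_1+\dots+a_tg_t+r$ where $r=0$ or $r$ is a $k$-combination of monomials none of which is divisible by any $\lt(g_i)$. Since $r=f-\sum_i a_ig_i\in I$, if $r$ were nonzero then $\lt(r)\in\langle\lt(I)\rangle=\langle\lm(g_1),\dots,\lm(g_t)\rangle$, and the standard characterization of membership in a monomial ideal would force $\lm(r)$ to be divisible by some $\lm(g_i)$, contradicting the defining property of the remainder. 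Hence $r=0$ and $f\in\langle g_1,\dots,g_t\rangle$.

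The main obstacle is Dickson's Lemma: once the leading-term ideal is known to be finitely generated by actual leading monomials of ideal elements, the passage back to $I$ is a short formal argument resting entirely on the division algorithm. The care required in the inductive step of Dickson's Lemma, and in realizing the monomial generators as leading terms of elements of $I$, is where the real work lies; everything downstream follows mechanically.
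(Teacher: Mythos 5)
Your argument is correct and is precisely the proof the paper implicitly relies on: the paper states the Hilbert Basis Theorem without proof, deferring to \cite{clo}, and your route --- pass to the monomial ideal $\langle\lt(I)\rangle$, use Dickson's Lemma to extract finitely many genuine leading monomials $\lm(g_i)$ with $g_i\in I$, then apply the division algorithm to show the remainder of any $f\in I$ on division by $(g_1,\dots,g_t)$ must vanish --- is exactly the standard argument given there. One small correction: the inductive proof of Dickson's Lemma needs only the ordinary well-ordering of $\mathbb{N}$ on the exponents of the last variable and is independent of any monomial order, so you should not invoke property (iii) of the monomial ordering there (in \cite{clo} that property is in fact deduced with the help of Dickson's Lemma, not used to prove it).
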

Fix a monomial order. A finite subset $G=\{g_1,\dots,g_t\}$ of an ideal $I\subseteq k[X_1,\dots,X_m]$ is said to be a Groebner basis if $\langle\lt(g_1),\dots,\lt(g_t)\rangle=\langle\lt(I)\rangle$.
\begin{cor}
Fix a monomial order. Then every ideal $I\subseteq k[X_1,\dots,X_m]$ other than $\{0\}$ has a Groebner basis. Furthermore, any Groebner basis for an ideal $I$ is a basis of $I$.
\end{cor}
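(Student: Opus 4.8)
The plan is to establish the two claims in turn, relying only on the Hilbert Basis Theorem and the division algorithm of Section 2.

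\textbf{Existence.} First I would pass to the ideal of leading terms $\langle \lt(I)\rangle$. Being an ideal of $k[X_1,\dots,X_m]$, it admits a finite generating set by the Hilbert Basis Theorem. The key observation is that these generators may be taken inside the set $\lt(I)=\{\lt(f):f\in I\}$: each generator lies in $\langle \lt(I)\rangle$, hence is a finite $k[X_1,\dots,X_m]$-combination of elements of $\lt(I)$, and gathering the finitely many elements of $\lt(I)$ that actually occur yields leading terms $\lt(g_1),\dots,\lt(g_t)$ with $g_i\in I$ and $\langle \lt(g_1),\dots,\lt(g_t)\rangle=\langle \lt(I)\rangle$. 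By the definition of a Groebner basis, $G=\{g_1,\dots,g_t\}$ is then a Groebner basis of $I$.

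\textbf{A Groebner basis generates $I$.} Let $G=\{g_1,\dots,g_t\}$ be any Groebner basis. The inclusion $\langle g_1,\dots,g_t\rangle\subseteq I$ is immediate since each $g_i\in I$. For the reverse inclusion, take $f\in I$ and divide it by $G$ using the Theorem of Section 2, obtaining $f=a_1g_1+\dots+a_tg_t+r$ with $r=0$ or no monomial of $r$ divisible by any $\lt(g_i)$. As $r=f-\sum_i a_ig_i$ is a combination of elements of $I$, we have $r\in I$. If $r\neq 0$, then $\lt(r)\in\langle \lt(I)\rangle=\langle \lt(g_1),\dots,\lt(g_t)\rangle$; since the right-hand side is a monomial ideal, $\lt(r)$ would have to be divisible by some $\lt(g_i)$, contradicting the remainder condition. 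Hence $r=0$ and $f\in\langle g_1,\dots,g_t\rangle$, so $I=\langle g_1,\dots,g_t\rangle$.

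\textbf{Main obstacle.} I expect the delicate points to be the two places where one passes between abstract membership in an ideal and concrete divisibility of monomials: extracting leading-term generators out of the finite set produced by the Hilbert Basis Theorem, and the elementary fact that a monomial lying in a monomial ideal must be divisible by one of its monomial generators. The latter is what turns $\lt(r)\in\langle \lt(I)\rangle$ into a genuine divisibility and thereby closes the contradiction; although not deep, it is the hinge of the whole argument and should be invoked explicitly rather than glossed over.
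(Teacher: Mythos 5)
Your proof is correct and follows essentially the same route as the paper's source: the paper states this corollary without proof, deferring to \cite{clo}, and your argument is precisely the standard one given there (extract generators of $\langle \lt(I)\rangle$ from the set $\lt(I)$ via the Hilbert Basis Theorem, then divide an arbitrary $f\in I$ by $G$ and force the remainder to vanish). The two hinges you identify are the right ones, and in particular you correctly make explicit the monomial-ideal lemma that a monomial in $\langle \lt(g_1),\dots,\lt(g_t)\rangle$ must be divisible by some $\lt(g_i)$, which is what converts $\lt(r)\in\langle \lt(I)\rangle$ into the contradiction.
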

\begin{prop}\label{prop1}
Let $G=\{g_1,\dots,g_t\}$ be a Groebner basis for an ideal $I\subseteq k[X_1,\dots,X_m]$ and let $f\in k[X_1,\dots,X_m]$. Then there is a unique $r\in k[X_1,\dots,X_m]$ with the following properties :\\
$(i)$\ \ No term of $r$ is divisible by any of $\lt(g_1),\dots, \lt(g_t)$.\\
$(ii)$\ \ There is $g\in I$ such that $f=g+r$.\\
In particular, $r$ is the remainder on division of $f$ by $G$ no matter how the elements of $G$ are listed when using the division algorithm. 
\end{prop}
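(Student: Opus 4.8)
The plan is to prove existence and uniqueness of $r$ separately. Existence is nearly immediate: apply the division algorithm of Theorem 2 to $f$ using the ordered tuple $(g_1,\dots,g_t)$, obtaining $f = a_1g_1 + \dots + a_tg_t + r$ where no term of $r$ is divisible by any $\lt(g_i)$. Setting $g = a_1g_1 + \dots + a_tg_t$ gives $g \in I$ (since each $g_i \in I$) and $f = g + r$, so properties $(i)$ and $(ii)$ hold.

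The substance is uniqueness, and this is where the Groebner basis hypothesis does the real work. Suppose $f = g + r = g' + r'$ with $g, g' \in I$ and both $r, r'$ satisfying $(i)$. Then $r - r' = g' - g \in I$. The key step is to argue that $r - r'$ must be zero. If $r - r' \neq 0$, then $\lt(r-r') \in \langle \lt(I)\rangle$ since $r - r' \in I$. Because $G$ is a Groebner basis, $\langle \lt(I)\rangle = \langle \lt(g_1),\dots,\lt(g_t)\rangle$, so $\lt(r-r')$ lies in the monomial ideal generated by the $\lt(g_i)$; hence $\lt(r-r')$ is divisible by some $\lt(g_i)$. I would invoke here the standard fact that membership of a monomial in a monomial ideal $\langle \lt(g_1),\dots,\lt(g_t)\rangle$ forces divisibility by one of the generators. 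But $\lt(r-r')$ is a term of $r$ or of $r'$ (after cancellation, any surviving leading term comes from one of the two remainders), and by property $(i)$ no term of $r$ or $r'$ is divisible by any $\lt(g_i)$ — a contradiction. Therefore $r - r' = 0$, which gives $r = r'$ and also $g = g'$.

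The main obstacle, such as it is, lies in justifying that $\lt(r - r')$ is genuinely a term appearing in $r$ or $r'$: one must check that the leading term surviving the subtraction cannot be created by cancellation, which follows because the leading term of $r - r'$ equals one of the monomials already present in $r$ or $r'$ with a nonzero coefficient. This is exactly the point where the Groebner property is essential; for an arbitrary generating set $G$ one would only have $\langle \lt(g_1),\dots,\lt(g_t)\rangle \subseteq \langle \lt(I)\rangle$, and the containment could be strict, so $\lt(r-r')$ need not be divisible by any $\lt(g_i)$ and uniqueness would fail.

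Finally, the concluding sentence of the proposition follows at once: the division algorithm of Theorem 2 always produces an output satisfying $(i)$ and $(ii)$ regardless of the ordering of $g_1,\dots,g_t$, and since we have just shown such an $r$ is unique, the remainder is independent of the listing order.
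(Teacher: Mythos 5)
Your proof is correct. The paper itself states this proposition without proof, deferring to Cox, Little and O'Shea \cite{clo}, and your argument is precisely the standard one given there: existence by running the division algorithm of Theorem 2.1 and absorbing $\sum a_ig_i$ into $g\in I$; uniqueness by noting $r-r'\in I$, so if $r-r'\neq 0$ its leading term lies in $\langle\lt(I)\rangle=\langle\lt(g_1),\dots,\lt(g_t)\rangle$ and is therefore divisible by some $\lt(g_i)$ (a monomial in a monomial ideal is divisible by one of the generators), contradicting $(i)$ since every term of $r-r'$ already occurs in $r$ or $r'$ --- a point you rightly flag and correctly justify, as subtraction cannot create new monomials.
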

\begin{cor}\label{cor3}
Let $G=\{g_1,\dots,g_t\}$ be a Groebner basis for an ideal $I\subseteq k[X_1,\dots,X_m]$ and let $f\in k[X_1,\dots,X_m]$. Then $f\in I$ if and only if the remainder on division of $f$ by $G$ is zero.
\end{cor}
We will write $\overline{f}^F$ for the remainder on division of $f$ by the ordered s-tuple $F=(f_1,\dots, f_s)$. If $F$ is a Groebner basis for $\langle f_1,\dots, f_s\rangle$, then we can regard $F$ as a set without any particular order.\\ 
Let $f, g\in k[X_1,\dots,X_m] $ be nonzero polynomials. If $multideg(f)=\alpha$ and $multideg(g)=\beta$, then let $\gamma=(\gamma_1,\dots, \gamma_m)$ where $\gamma_i=\max(\alpha_i, \beta_i)$ for each $i$. We call $X^\gamma$ the least common multiple of $\lm(f)$ and $\lm(g)$, written $X^\gamma=\lcm(\lm(f),\lm(g))$. The S-polynomial of $f$ and $g$ is the combination 
\[S(f,g)=\dfrac{X^\gamma}{\lt(f)}.f-\dfrac{X^\gamma}{\lt(g)}.g\]
\begin{thm}
Let $I$ be a polynomial ideal. Then a basis $G=\{g_1,\dots,g_t\}$ for $I$ is a Groebner basis for $I$ if and only if for all pairs $i\neq j$, the remainder on division of $S(g_i,g_j)$ by $G$ listed in some order is zero.
\end{thm}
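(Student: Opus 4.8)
The plan is to prove the two implications separately, dispatching the forward direction quickly and concentrating the real effort on the converse, which is the substantive half (Buchberger's criterion). For the forward direction, suppose $G$ is a Groebner basis for $I$. Each S-polynomial lies in $I$, since by its very definition $S(g_i,g_j)=\frac{X^\gamma}{\lt(g_i)}.g_i-\frac{X^\gamma}{\lt(g_j)}.g_j$ is a polynomial combination of the $g_i\in I$. Then Corollary \ref{cor3} applies verbatim: the remainder on division of any element of $I$ by the Groebner basis $G$ is zero, so $\overline{S(g_i,g_j)}^G=0$ for all $i\neq j$.

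For the converse, I would assume every S-polynomial reduces to zero modulo $G$ and take an arbitrary nonzero $f\in I$; the goal is to show $\lt(f)\in\langle\lt(g_1),\dots,\lt(g_t)\rangle$, since this is exactly what upgrades the generating set $G$ to a Groebner basis. Writing $f=\sum_{i=1}^t h_i g_i$ and setting $\delta=\max_i multideg(h_i g_i)$ with respect to $>$, one always has $multideg(f)\le\delta$. Among all representations of $f$ of this form I would select one for which $\delta$ is as small as possible, which is legitimate because $>$ is a well-ordering. If $multideg(f)=\delta$, then some leading term $\lt(h_i g_i)=\lt(h_i)\lt(g_i)$ survives the summation, whence $\lt(g_i)$ divides $\lt(f)$ and we are done.

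The crux is to exclude the case $multideg(f)<\delta$. Here the degree-$\delta$ parts of those summands $h_i g_i$ with $multideg(h_i g_i)=\delta$ must cancel against one another. The key technical tool is a grouping lemma: a $k$-linear combination of terms, all of the same multidegree $\delta$, whose total sum has multidegree strictly below $\delta$, can be rewritten as a combination of S-polynomials $S(g_i,g_j)$ (each scaled by a monomial), every such contribution having multidegree strictly less than $\delta$. I would isolate the leading terms of the offending summands, apply this lemma, and then invoke the hypothesis $\overline{S(g_i,g_j)}^G=0$. By the division algorithm (Theorem 2.1), the latter means $S(g_i,g_j)=\sum_k a_{ijk}g_k$ with $multideg(a_{ijk}g_k)\le multideg(S(g_i,g_j))<\delta$; substituting these expansions back into the expression for $f$ yields a new representation whose maximal multidegree is strictly smaller than $\delta$, contradicting the minimality of the chosen $\delta$.

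The main obstacle is precisely this grouping lemma together with the degree bookkeeping in the substitution step. One must verify that the cancelling degree-$\delta$ parts telescope exactly into monomial multiples of S-polynomials, and then confirm that after replacing each $S(g_i,g_j)$ by its expansion $\sum_k a_{ijk}g_k$, every resulting term — monomial multiple included — still has multidegree strictly below $\delta$, so that the new $\delta$ genuinely decreases. Controlling these multidegrees is where all the care lies; once it is in place, the well-ordering minimality argument forces $multideg(f)=\delta$, and the divisibility of $\lt(f)$ by some $\lt(g_i)$ follows.
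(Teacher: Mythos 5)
You are correct in outline, but there is nothing in the paper to compare against: the paper states this theorem among recalled facts and explicitly defers all proofs in that section to \cite{clo}. What you have reconstructed, essentially verbatim, is the standard proof of Buchberger's criterion from that very reference. Your forward direction is complete as written: $S(g_i,g_j)\in I$ because it is a polynomial combination of elements of $I$, and Corollary \ref{cor3} finishes it. Your converse follows the canonical route: pick a representation $f=\sum_i h_ig_i$ minimizing $\delta=\max_i multideg(h_ig_i)$ (legitimate since a monomial order is a well-ordering), observe that $multideg(f)=\delta$ forces $\lt(g_i)\mid\lt(f)$ for some $i$ (all surviving degree-$\delta$ leading terms are scalar multiples of $X^\delta$, which each such $\lt(g_i)$ divides), and exclude $multideg(f)<\delta$ by rewriting the cancelling degree-$\delta$ part as monomial multiples of S-polynomials and substituting the expansions $S(g_i,g_j)=\sum_k a_{ijk}g_k$ with $multideg(a_{ijk}g_k)\leq multideg(S(g_i,g_j))$, which the hypothesis together with the division theorem provides. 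Your degree bookkeeping is sound: since $\lm(g_i)$ and $\lm(g_j)$ both divide $X^\delta$, so does $X^{\gamma_{ij}}=\lcm(\lm(g_i),\lm(g_j))$, and $multideg\bigl(X^{\delta-\gamma_{ij}}S(g_i,g_j)\bigr)<\delta$ because the leading terms cancel inside an S-polynomial, so every substituted term inherits the strict bound and $\delta$ genuinely drops, contradicting minimality. The one ingredient you assert without proof is the grouping lemma itself; it is elementary and follows from a telescoping identity: writing the offending degree-$\delta$ summands as $d_ip_i$ with $\lc(p_i)=1$ and $\sum_i d_i=0$ (forced by the cancellation), one has $\sum_i d_ip_i=d_1(p_1-p_2)+(d_1+d_2)(p_2-p_3)+\dots$, and each difference $p_j-p_k$ is exactly $X^{\delta-\gamma_{jk}}S(g_j,g_k)$. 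With that lemma supplied, your outline closes into a complete and correct proof --- the same one the paper implicitly invokes by citation.
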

 \begin{rem}
 Let $I\subseteq k[X_1,\dots,X_m]$ be an ideal, and let $G$ be a Groebner basis of $I$. Then\\
 $(i)$\ \ $\overline{f}^G=\overline{g}^G$ if and only if $f-g\in I$\\
 $(ii)$\ \ $\overline{f+g}^G= \overline{f}^G+\overline{g}^G$\\
 $(iii)$\ \ $\overline{f.g}^G=\overline{\overline{f}^G.\overline{g}^G}^G$
  \end{rem}
  A reduced Groebner basis for a polynomial ideal $I$ is a Groebner basis $G$ for $I$ such that :\\
  $(i)$\ \ $\lc(p)=1$ for all $p\in G$\\
  $(ii)$\ \ For all $p\in G$, no monomial of $p$ lies in $\langle \lt(G-\{p\})\rangle$
  \begin{prop}
  Let $I\neq \{0\}$ be a polynomial ideal. Then, for a given nomomial ordering, $I$  has a unique reduced Groebner basis.
  \end{prop}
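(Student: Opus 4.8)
The plan is to prove existence and uniqueness separately, treating uniqueness as the substantive part. For existence, I would begin with any Groebner basis $G=\{g_1,\dots,g_t\}$ for $I$, guaranteed by the Corollary following the Hilbert Basis Theorem. First I would normalize by dividing each $g_i$ by $\lc(g_i)$, so that $\lc(g_i)=1$. Next I would pass to a minimal basis by discarding any $g_i$ whose leading term $\lt(g_i)$ lies in $\langle\lt(G-\{g_i\})\rangle$; such a $g_i$ is redundant, since removing it does not change $\langle\lt(G)\rangle=\langle\lt(I)\rangle$, so the remaining set is still a Groebner basis. Finally, to enforce condition $(ii)$, I would replace each surviving $g_i$ by its remainder $g_i':=\overline{g_i}^{\,G-\{g_i\}}$ on division by the other generators; this alters $g_i$ only by an element of $I$ while leaving $\lt(g_i)$ fixed (in a minimal basis $\lt(g_i)$ is divisible by no other leading term, so the division never touches it), hence the result is still a Groebner basis, and by construction no monomial of $g_i'$ is divisible by any $\lt(g_j)$ with $j\neq i$. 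This produces a reduced Groebner basis.

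For uniqueness, suppose $G$ and $\tilde G$ are two reduced Groebner bases for $I$. The first key step is to show they share the same set of leading terms. Given $g\in G$, since $\lt(g)\in\langle\lt(I)\rangle=\langle\lt(\tilde G)\rangle$, there is $\tilde g\in\tilde G$ with $\lt(\tilde g)$ dividing $\lt(g)$; symmetrically some $g'\in G$ has $\lt(g')$ dividing $\lt(\tilde g)$, hence $\lt(g')$ divides $\lt(g)$. Because $G$ is reduced, condition $(ii)$ forbids $\lt(g')$ from dividing $\lt(g)$ unless $g'=g$, so $\lt(g)=\lt(\tilde g)$. This matches each $g\in G$ with a unique $\tilde g\in\tilde G$ sharing its leading term and shows $\{\lt(g):g\in G\}=\{\lt(\tilde g):\tilde g\in\tilde G\}$.

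The second step compares matched elements. Fix $g\in G$ and its partner $\tilde g\in\tilde G$ with $\lt(g)=\lt(\tilde g)$, and consider $h:=g-\tilde g\in I$. Since both leading coefficients are $1$, the leading terms cancel, so every monomial appearing in $h$ is a non-leading monomial of $g$ or of $\tilde g$. By reducedness, none of these monomials lies in $\langle\lt(G-\{g\})\rangle=\langle\lt(\tilde G-\{\tilde g\})\rangle$, and a non-leading monomial of $g$ (resp. $\tilde g$) is strictly smaller than $\lt(g)$ in the order, hence not divisible by $\lt(g)$ either. Thus no term of $h$ is divisible by any leading term of $G$, so $h$ equals its own remainder on division by $G$; but $h\in I$ forces this remainder to be zero by Corollary \ref{cor3}. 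Hence $g=\tilde g$, and as $g$ was arbitrary, $G=\tilde G$.

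The main obstacle is the leading-term identification in the first step of uniqueness: it is where the reducedness hypothesis is genuinely used, via the observation that in a reduced (in particular minimal) basis no leading term divides another. Once the leading terms are shown to coincide, the difference argument is routine, relying only on the fact that a monomial divisible by $\lt(g)$ cannot be strictly smaller than $\lt(g)$ in any monomial order, together with Corollary \ref{cor3}.
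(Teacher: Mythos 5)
Your proof is correct, but note that the paper itself offers no proof of this proposition to compare against: it appears in Section 3 among the background facts explicitly recalled from \cite{clo} (``Details and proofs can be found in \cite{clo}''). Your argument is essentially the canonical one from that reference. For existence: normalize leading coefficients, prune to a minimal Groebner basis by discarding generators whose leading term lies in the ideal generated by the others, then autoreduce each generator modulo the rest; your parenthetical observation that in a minimal basis $\lt(g_i)$ is divisible by no other leading term, hence survives the division, is exactly the point that keeps the set a Groebner basis --- and since the reducedness condition $(ii)$ refers only to the \emph{leading} terms of the other generators, which never change, it does not matter whether the reductions are performed sequentially or simultaneously. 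For uniqueness: the mutual-divisibility argument correctly uses condition $(ii)$ to force $g'=g$ and hence $\lt(g)=\lt(\tilde g)$ (with distinctness of leading terms within a reduced basis making the matching a bijection), and the difference argument correctly combines the uniqueness of remainders (Proposition \ref{prop1}) with the ideal-membership test (Corollary \ref{cor3}) to conclude $g-\tilde g=0$. The two facts you use implicitly --- that a monomial belongs to a monomial ideal iff it is divisible by one of the monomial generators, and that a monomial divisible by $\lm(g)$ cannot be smaller than $\lm(g)$ because $1$ is minimal under any monomial order (a consequence of the well-ordering axiom $(iii)$) --- are standard, and you flag the second explicitly, so the argument is complete.
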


\section{Binary Reed-Muller codes}
In this section, we recall some basic properties of the Reed-Muller codes.\\
Consider the ideal $\mathcal{I}=\langle X_1^2-1,\dots,X_m^2-1\rangle$ of the polynomial ring $\mathbb{F}_2[X_1,\dots,X_m]$. We use the quotient ring $\mathcal{A}=\mathbb{F}_2[X_1,\dots,X_m]/\mathcal{I}$ as the ambiant space for the codes of length $2^m$ over $\mathbb{F}_2$. We set $x_1=X_1+\mathcal{I},\dots,x_m=X_m+\mathcal{I}$ and we obtains $x_1^2=1,\dots, x_m^2=1$. Then, we have \[\mathcal{A}=\left\{a(x)=\sum_{i_1=0}^{1}\dots\sum_{i_m=0}^{1}a_{i_1\dots i_m}x_1^{i_1}\dots x_m^{i_m}/a_{i_1\dots i_m}\in\mathbb{F}_2\right\}.\] We write also $a(x)=\displaystyle{\sum_{i\in\Gamma}a_ix^i}$ where $i=(i_1,\dots, i_m)\in\Gamma\subseteq(\{0, 1\})^m$ and $x^i=x_1^{i_1}\dots x_m^{i_m}$. 
Let us order the monomials in the set $\{x_1^{i_1}\dots x_m^{i_m}/0\leq i_1,\dots,i_m\leq 1\}$ with the graded lexicographic order (grlex) such that $x_1>_{lex}x_2>\dots>_{lex}x_m$. We have the following correspondance : 
\begin{equation}
\mathcal{A}\ni a(x)=\sum_{i_1=0}^{1}\dots\sum_{i_m=0}^{1}a_{i_1\dots i_m}x_1^{i_1}\dots x_m^{i_m}\longleftrightarrow a=(a_{i_1\dots i_m})_{0\leq i_1,\dots,i_m\leq 1}\in(\mathbb{F}_2)^{2^m}
\end{equation}
\begin{thm}\label{thm4.1}
$\mathcal{A}$ is a local ring with maximal ideal $M=\rad(\mathcal{A})$ the radical of $\mathcal{A}$. For each integer $l$ such that $0\leq l\leq m$, a linear basis of the radical power $M^l$ of $M$ over $\mathbb{F}_2$ is given by
\begin{equation}
B_l:=\left\{(x_1-1)^{i_1}\dots (x_m-1)^{i_m}/0\leq i_1,\dots,i_m\leq 1,i_1+\dots+i_m\geq l \right\}
\end{equation}
$B_l$ is called the Jennings basis of $M^l$, and we have the sequence of ideals
\begin{equation}
\{0\}=M^{m+1}\subset M^m\subset\dots\subset M^2\subset M\subset \mathcal{A}
\end{equation}
 \end{thm}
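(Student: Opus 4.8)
The plan is to pass to the new variables $y_i = x_i - 1$. Since we are in characteristic $2$, we have $X_i^2 - 1 = (X_i - 1)^2$, so modulo $\mathcal{I}$ each $y_i = x_i - 1$ satisfies $y_i^2 = 0$; consequently $\mathcal{A} \cong \mathbb{F}_2[y_1,\dots,y_m]/(y_1^2,\dots,y_m^2)$. This truncated algebra has as an $\mathbb{F}_2$-basis the squarefree monomials $y^i = y_1^{i_1}\cdots y_m^{i_m}$ with $i = (i_1,\dots,i_m) \in \{0,1\}^m$, of which there are $2^m$, matching $\dim_{\mathbb{F}_2}\mathcal{A}$. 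In these coordinates $B_l$ consists precisely of those monomials with $|i| = i_1 + \dots + i_m \geq l$.

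First I would establish that $\mathcal{A}$ is local with $M = \rad(\mathcal{A}) = (y_1,\dots,y_m)$. The quotient $\mathcal{A}/(y_1,\dots,y_m)$ is obtained by setting every $y_i = 0$, hence is isomorphic to the field $\mathbb{F}_2$, so $(y_1,\dots,y_m)$ is maximal. Moreover any product of $m+1$ generators $y_{j_1}\cdots y_{j_{m+1}}$ must, by the pigeonhole principle, repeat some $y_j$ and therefore vanish; hence $M^{m+1} = \{0\}$ and in particular every element of $M$ is nilpotent, so this ideal lies in the Jacobson radical. Since $\mathcal{A}$ is finite-dimensional over $\mathbb{F}_2$, hence Artinian, the nil maximal ideal $(y_1,\dots,y_m)$ is the unique maximal ideal and coincides with $\rad(\mathcal{A})$, giving $M = \rad(\mathcal{A})$.

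Next I would prove the identity $M^l = \operatorname{span}_{\mathbb{F}_2}(B_l)$ by two inclusions. For $B_l \subseteq M^l$, any $y^i$ with $|i| \geq l$ is a product of $|i| \geq l$ generators $y_j \in M$, hence lies in $M^{|i|} \subseteq M^l$. For the reverse inclusion, $M^l$ is spanned by products of $l$ elements of $M$; expanding each factor into the monomials $y^{j}$ with $|j| \geq 1$, a typical spanning product is $y^{j_1}\cdots y^{j_l} = y^{j_1 + \dots + j_l}$ of total degree $\geq l$, which after reduction by the relations $y_i^2 = 0$ is either $0$ or a squarefree monomial of total degree $\geq l$, i.e.\ an element of $B_l$. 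Thus $M^l \subseteq \operatorname{span}_{\mathbb{F}_2}(B_l)$. Linear independence of $B_l$ is automatic, being a subset of the monomial basis of $\mathcal{A}$, so $B_l$ is an $\mathbb{F}_2$-basis of $M^l$.

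The main subtlety, and the step I would treat most carefully, is the reverse inclusion: the bookkeeping showing that multiplying out and reducing never produces a monomial of total degree below $l$ and never leaves a non-squarefree survivor. This is exactly where the relations $y_i^2 = 0$ and the additivity of total degree under multiplication are used. Finally, the chain (3) follows by comparing bases: $B_0$ is the full monomial basis so $M^0 = \mathcal{A}$, while $B_{m+1} = \varnothing$ (there is no squarefree monomial of degree exceeding $m$) gives $M^{m+1} = \{0\}$; and for each $0 \leq l \leq m$ the inclusion $M^{l+1} \subseteq M^l$ is strict because $B_l$ properly contains $B_{l+1}$ (the monomial $y_1\cdots y_l$ lies in the former but not the latter), so that $\dim_{\mathbb{F}_2} M^l = \sum_{k \geq l}\binom{m}{k}$ strictly decreases.
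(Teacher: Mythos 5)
Your proof is correct, but there is no in-paper argument to compare it against: the paper states Theorem \ref{thm4.1} as recalled background (this is the classical Jennings basis), offering no proof and implicitly deferring to the literature (Jennings 1941, Berman 1967, Charpin 1988, cited in the bibliography). Your blind reconstruction is the standard route and is complete. The change of variables $y_i = x_i - 1$ is legitimate precisely because $X_i^2 - 1 = (X_i-1)^2$ in characteristic $2$, identifying $\mathcal{A}$ with the truncated algebra $\mathbb{F}_2[y_1,\dots,y_m]/(y_1^2,\dots,y_m^2)$, whose squarefree monomials $y^i$ are exactly the elements $(x_1-1)^{i_1}\cdots(x_m-1)^{i_m}$ appearing in $B_l$. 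Each step checks out: $(y_1,\dots,y_m)$ has residue field $\mathbb{F}_2$, hence is maximal, and is nil since the pigeonhole argument gives $M^{m+1}=\{0\}$; a nil ideal lies in the Jacobson radical, which lies in every maximal ideal, so maximality forces $M$ to be the unique maximal ideal and $M=\rad(\mathcal{A})$ (your appeal to Artinianness is harmless but not even needed). For the spanning identity, the inclusion $\operatorname{span}_{\mathbb{F}_2}(B_l)\subseteq M^l$ is immediate, and your treatment of the converse is sound: $M^l$ is additively spanned by products of $l$ elements of $M$, each expanding into terms $y^{j_1}\cdots y^{j_l}$ with every $|j_s|\geq 1$, and such a term is either killed outright by $y_i^2=0$ or is squarefree of degree $\geq l$ --- the relations annihilate rather than lower degree, so no monomial of degree below $l$ can survive, which is exactly the bookkeeping point you rightly flag. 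Independence is inherited from the monomial basis of $\mathcal{A}$, and strictness of the chain from $y_1\cdots y_l\in B_l\setminus B_{l+1}$. As a bonus, your dimension count $\dim_{\mathbb{F}_2} M^l=\sum_{k\geq l}\binom{m}{k}$ reproves the paper's Corollary 4.2 in passing.
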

 \begin{cor}
 We have $\dim_{\mathbb{F}_2}(M^l)=\dbinom{m}{l}+\dbinom{m}{l+1}+\dots+\dbinom{m}{m}$.
 \end{cor}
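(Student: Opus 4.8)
The plan is to read off the dimension directly from the Jennings basis furnished by Theorem \ref{thm4.1}. Since $B_l$ is stated there to be a linear basis of $M^l$ over $\mathbb{F}_2$, the dimension $\dim_{\mathbb{F}_2}(M^l)$ equals the cardinality of $B_l$, so the entire task reduces to a counting problem and no algebraic content beyond the theorem is needed.

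First I would observe that each basis element $(x_1-1)^{i_1}\cdots(x_m-1)^{i_m}$ is uniquely determined by its exponent tuple $(i_1,\dots,i_m)$, since as a basis the elements of $B_l$ are in particular distinct and linearly independent. Because the exponents satisfy $0\leq i_j\leq 1$, these tuples range over $\{0,1\}^m$, and the defining condition $i_1+\dots+i_m\geq l$ selects exactly the binary vectors of Hamming weight at least $l$. Thus $|B_l|$ equals the number of such vectors, and it remains only to count them.

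Finally I would count by weight: for each integer $k$ with $l\leq k\leq m$, the number of tuples in $\{0,1\}^m$ having exactly $k$ ones is $\dbinom{m}{k}$, and the classes for distinct values of $k$ are disjoint. Summing over $k$ gives
\[
\dim_{\mathbb{F}_2}(M^l)=|B_l|=\sum_{k=l}^{m}\dbinom{m}{k}=\dbinom{m}{l}+\dbinom{m}{l+1}+\dots+\dbinom{m}{m},
\]
which is the asserted formula. There is no genuine obstacle here: the only substantive point is the bijection between basis elements and binary vectors of weight at least $l$, after which the result follows from the standard fact that weight-$k$ vectors in $\{0,1\}^m$ are enumerated by $\dbinom{m}{k}$.
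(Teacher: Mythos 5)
Your proof is correct and follows exactly the route the paper intends: the corollary is stated immediately after Theorem \ref{thm4.1} precisely because $\dim_{\mathbb{F}_2}(M^l)$ is read off as the cardinality of the Jennings basis $B_l$, counted by the weight of the exponent tuples in $\{0,1\}^m$. Your counting argument, including the identification of basis elements with binary vectors of Hamming weight at least $l$, is exactly this implicit proof, carefully spelled out.
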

\noindent Let $P(m,2)$ be the set of all reduced form polynomials in $m$ variables $Y_1,\dots, Y_m$ over $\mathbb{F}_2$ :
 \begin{center}
 $P(m,2) := \left\{P(Y_1,\dots,Y_m)=\sum_{i_1=0}^{1}\dots\sum_{i_m=0}^{1}u_{i_1\dots i_m}Y_1^{i_1}\dots Y_m^{i_m}/u_{i_1\dots i_m}\in\mathbb{F}_2\right\}$
 \end{center}
 We set $\beta_0=0$ and $\beta_1=1$.\\
 As vector spaces over $\mathbb{F}_2$, we have the following isomorphism :
 \begin{align*}
 \phi\; :\;\;& P(m,2)\hspace{1.1cm} \longrightarrow \hspace{2cm} \mathcal{A}  \\
      & P(Y_1,\dots, Y_m) \longmapsto   \displaystyle{\sum_{i_1=0}^{1}\dots\sum_{i_m=0}^{1}P(\beta_{i_1},\dots,\beta_{i_m})x_1^{i_1}\dots x_m^{i_m}}
 \end{align*}
 Let $\nu$ be an integer such that $0\leq \nu\leq m$. Denote by $P_\nu(m,2)$ the subspace of $P(m,2)$ generated by the monomials of total degree $\nu$ or less. The $\nu$th-order Reed-Muller code of lenght $2^m$ over $\mathbb{F}_2$ is defined by 
 \begin{center}
 $\mathcal{C}_\nu(m,2) := \left\{(P(\beta_{i_1},\dots,\beta_{i_m}))_{0\leq i_1,\dots,i_m\leq 1}/P(Y_1,\dots,Y_m)\in P_\nu(m, 2)\right\}$
 \end{center}
 $\mathcal{C}_\nu(m,2)$ is a subspace of $(\mathbb{F}_2)^{2^m}$, and we have the following sequence
 \begin{equation}
  \{0\}\subset \mathcal{C}_0(m,2)\subset\mathcal{C}_1(m,2)\dots\subset\mathcal{C}_{m-1}(m,2)\subset\mathcal{C}_m(m,2)=(\mathbb{F}_2)^{2^m}
  \end{equation}
 \begin{thm}[Berman] \ \ \\
We have $M^l=\mathcal{C}_{m-l}(m,2)$, for $l$ such that\ \ $0\leq l\leq m$.
 \end{thm}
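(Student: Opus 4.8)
The plan is to transport the entire statement into the algebra $\mathcal{A}$ by means of the correspondence (1) together with the isomorphism $\phi$. Under (1), a codeword $(P(\beta_{i_1},\dots,\beta_{i_m}))_{0\le i_1,\dots,i_m\le 1}$ of $\mathcal{C}_\nu(m,2)$ is precisely the coefficient vector of $\phi(P)\in\mathcal{A}$; hence $\mathcal{C}_\nu(m,2)$, regarded inside $\mathcal{A}$, is exactly $\phi(P_\nu(m,2))$. Writing $\nu=m-l$, it therefore suffices to prove the equality of $\mathbb{F}_2$-subspaces $\phi(P_{m-l}(m,2))=M^l$ of $\mathcal{A}$.

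First I would compute $\phi$ on a reduced monomial $Y^j=Y_1^{j_1}\cdots Y_m^{j_m}$ with each $j_k\in\{0,1\}$. Since the evaluation factorises over the variables, $\phi(Y^j)=\prod_{k=1}^m\bigl(\sum_{i_k=0}^1\beta_{i_k}^{j_k}x_k^{i_k}\bigr)$, and using $\beta_0=0,\ \beta_1=1$ the $k$-th factor equals $x_k$ when $j_k=1$ and equals $1+x_k=x_k-1$ when $j_k=0$ (recall $\operatorname{char}\mathbb{F}_2=2$). Thus $\phi(Y^j)=\prod_{k:\,j_k=1}x_k\cdot\prod_{k:\,j_k=0}(x_k-1)$. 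To see that this lies in $M^l$ whenever $d:=\deg(Y^j)=\sum_k j_k\le m-l$, I would substitute $x_k=(x_k-1)+1$ in the first product and expand, obtaining $\phi(Y^j)=\prod_{k:\,j_k=0}(x_k-1)\cdot\sum_{S\subseteq\{k:\,j_k=1\}}\prod_{k\in S}(x_k-1)$. Each summand is a squarefree product of $(m-d)+|S|\ge m-d\ge l$ distinct factors $(x_k-1)$, i.e. an element of the Jennings basis $B_l$ of Theorem \ref{thm4.1}, and so belongs to $M^l$. By linearity this gives the inclusion $\phi(P_{m-l}(m,2))\subseteq M^l$.

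Finally I would close the inclusion by a dimension count. The reduced monomials of total degree at most $m-l$ form a basis of $P_{m-l}(m,2)$, so $\dim_{\mathbb{F}_2}P_{m-l}(m,2)=\sum_{d=0}^{m-l}\binom{m}{d}$; on the other hand the Corollary computing $\dim_{\mathbb{F}_2}(M^l)$ gives $\dim_{\mathbb{F}_2}M^l=\sum_{d=l}^{m}\binom{m}{d}$, and the change of index $d\mapsto m-d$ together with $\binom{m}{m-d}=\binom{m}{d}$ shows that these two numbers coincide. Because $\phi$ is injective, $\dim_{\mathbb{F}_2}\phi(P_{m-l}(m,2))=\dim_{\mathbb{F}_2}P_{m-l}(m,2)=\dim_{\mathbb{F}_2}M^l$, so the inclusion proved above is forced to be an equality. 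Translating back through (1) yields $\mathcal{C}_{m-l}(m,2)=M^l$.

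The only genuinely computational ingredient is the factorisation of $\phi(Y^j)$ and its re-expansion in the variables $x_k-1$; once this is in place the equality is pinned down by the matching dimensions supplied by the Corollary. I expect the main obstacle to be recognising that $\phi$ carries $Y$-degree at most $m-l$ to radical level at least $l$, which is exactly the phenomenon that the substitution $x_k=(x_k-1)+1$ makes transparent, since every monomial $Y^j$ of low degree contributes many "missing" variables and hence many factors $(x_k-1)$.
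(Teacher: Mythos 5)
Your proof is correct, but note that the paper itself offers no proof of this theorem: it is stated as a known result, attributed to Berman's 1967 paper (and see also Charpin's generalisation in the bibliography), so there is no internal argument to compare yours against. What you have written is a complete, self-contained derivation using only material the paper does supply, and it holds up under checking: the factorisation $\phi(Y^j)=\prod_{k:\,j_k=1}x_k\cdot\prod_{k:\,j_k=0}(x_k-1)$ is right (the $k$-th factor is $\beta_0^{j_k}+\beta_1^{j_k}x_k$, where for $j_k=0$ one reads $\beta_{i_k}^{0}$ as the evaluation of the monomial $Y_k^0=1$, i.e.\ the convention $0^0=1$ -- worth saying explicitly); the re-expansion via $x_k=(x_k-1)+1$ produces squarefree products of at least $m-d\geq l$ distinct factors $(x_k-1)$, hence elements of the Jennings basis $B_l$ of Theorem \ref{thm4.1}, giving $\phi(P_{m-l}(m,2))\subseteq M^l$; and the dimension count $\sum_{d=0}^{m-l}\binom{m}{d}=\sum_{d=l}^{m}\binom{m}{d}$, combined with the paper's Corollary on $\dim_{\mathbb{F}_2}(M^l)$ and the injectivity of $\phi$ (which the paper asserts when it calls $\phi$ an isomorphism onto $\mathcal{A}$), upgrades the inclusion to equality. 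This two-step scheme -- an inclusion obtained by rewriting generators in terms of the $(x_k-1)$, closed by matching dimensions -- is essentially the classical route to Berman's theorem, and your observation that low $Y$-degree forces many ``missing'' variables, hence high radical depth, is exactly the heart of the matter. The only cosmetic caveat is that the inclusion-plus-dimension argument silently uses finite-dimensionality, which is of course available here.
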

\noindent The weight of a word $v=(v_1,v_2,\dots,v_{2^m})\in(\mathbb{F})^{2^m}$ is defined by $\omega(v):=\card(\{i/v_i\neq 0\})$
\begin{thm}
 The minimal weight of the Reed-Muller code $M^l$ is \[d=2^l,\ \ 0\leq l\leq m.\]
 \end{thm}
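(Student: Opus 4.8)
The plan is to prove the two inequalities $d \le 2^l$ and $d \ge 2^l$ separately, where $d$ denotes the minimum weight of $M^l$. Throughout I will use Berman's theorem $M^l = \mathcal{C}_{m-l}(m,2)$ together with the correspondence (1): the weight of an element of $\mathcal{A}$ is the number of distinct $x$-monomials occurring in it, and under the isomorphism $\phi$ this equals the number of nonzero coordinates of the associated evaluation vector. This lets me compute weights in whichever of the two pictures is more convenient.

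For the upper bound I would exhibit an explicit low-weight codeword taken from the Jennings basis $B_l$. Consider $b = (x_1-1)(x_2-1)\cdots(x_l-1)$, which lies in $B_l \subseteq M^l$ because its multi-index $(1,\ldots,1,0,\ldots,0)$ has total degree $l$. Working over $\mathbb{F}_2$, where $x_j - 1 = x_j + 1$, expanding the product gives $b = \sum_{S \subseteq \{1,\ldots,l\}} \prod_{j \in S} x_j$, a sum of exactly $2^l$ pairwise distinct monomials, each with coefficient $1$. By the correspondence (1) this means $\omega(b) = 2^l$, so $d \le 2^l$. For $l = 0$ one takes $b = 1$, of weight $1 = 2^0$.

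The lower bound is the substantial part, and I would prove it by induction on $m$ using the classical $(u \mid u+v)$ splitting on the last coordinate. Writing $d(r,m)$ for the minimum weight of $\mathcal{C}_r(m,2)$, I claim $d(r,m) = 2^{m-r}$ for $0 \le r \le m$; setting $r = m-l$ then yields $d = 2^l$. The base cases are $r = 0$ (the repetition code, of minimum weight $2^m$) and $r = m$ (the whole space $(\mathbb{F}_2)^{2^m}$, of minimum weight $1$). For $1 \le r \le m-1$, any $P \in P_r(m,2)$ is written uniquely as $P = P_0(Y_1,\ldots,Y_{m-1}) + Y_m\, P_1(Y_1,\ldots,Y_{m-1})$ with $\deg P_0 \le r$ and $\deg P_1 \le r-1$. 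Splitting the $2^m$ evaluation points according to $Y_m = 0$ or $Y_m = 1$, the codeword of $P$ has the form $(u, u+v)$, where $u \in \mathcal{C}_r(m-1,2)$ is the codeword of $P_0$ and $v \in \mathcal{C}_{r-1}(m-1,2)$ is the codeword of $P_1$; hence $\omega(P) = \omega(u) + \omega(u+v)$. If $v = 0$ then $P \ne 0$ forces $u \ne 0$ and $\omega(P) = 2\,\omega(u) \ge 2\,d(r,m-1) = 2^{m-r}$ by the induction hypothesis. If $v \ne 0$, then the triangle inequality $\omega(v) \le \omega(u) + \omega(u+v)$ (valid over $\mathbb{F}_2$ since $v = u + (u+v)$) gives $\omega(P) \ge \omega(v) \ge d(r-1,m-1) = 2^{m-r}$. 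In both cases $\omega(P) \ge 2^{m-r}$, completing the induction.

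The main obstacle is precisely this lower bound: the $(u \mid u+v)$ induction must keep careful track of the two degree budgets ($\deg P_0 \le r$ and $\deg P_1 \le r-1$) and of the ranges of $r$ for which the induction hypothesis applies, which is why $r = 0$ and $r = m$ are isolated as base cases. An alternative, more geometric route to the same bound identifies the minimum-weight words with indicator functions of $l$-dimensional affine subspaces of $\mathbb{F}_2^m$ (which contain $2^l$ points), but the inductive argument above is the most self-contained given only the results already established in the paper.
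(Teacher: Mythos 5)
Your proof is correct, but note that the paper itself offers no proof of this theorem: it appears in Section~4, which explicitly only ``recalls some basic properties'' of Reed--Muller codes, so the minimum-distance statement is quoted as classical background (traceable to the standard literature cited there, e.g.\ Kasami--Lin--Peterson or Blake--Mullin). You have therefore supplied what the paper omits, and you did it soundly. Your upper bound is well matched to the paper's algebraic framework: the Jennings-basis element $(x_1-1)\cdots(x_l-1)$ expands over $\mathbb{F}_2$ into $2^l$ distinct multilinear monomials, giving weight exactly $2^l$ under the coefficient correspondence~(1); this is in fact the same counting that the paper reuses later (Proposition 5.7, where $\omega(\overline{X_I}^G)=2^{\card(I)}-1$ for $\card(I)=l$). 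Your lower bound is the standard $(u\mid u+v)$ (Plotkin) induction on $m$ in the evaluation picture $\mathcal{C}_r(m,2)$, transported to $M^l$ via Berman's theorem with $r=m-l$; the case analysis is complete, the degree bookkeeping ($\deg P_0\le r$, $\deg P_1\le r-1$) is right, and the two delicate points are handled correctly: the subadditivity step $\omega(u)+\omega(u+v)\ge\omega(v)$ when $v\ne 0$, and the use of injectivity of the evaluation map $\phi$ (stated as an isomorphism in the paper) to pass from $P\ne 0$, $v=0$ to $u\ne 0$. One cosmetic remark: since your inductive step only ever invokes the \emph{lower} bound $d(r,m-1)\ge 2^{m-1-r}$ and $d(r-1,m-1)\ge 2^{m-r}$, you could state the induction purely as a lower-bound induction and combine it with the Jennings-element upper bound at the end, which slightly cleans up the logic; as written it is still correct because the base cases give equality and equality propagates.
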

\noindent $M^l$ is a $t$-error correcting code where $t$ is the maximal integer such that $2t+1\leq 2^l.$

\section{Decoding}
We now present our main results and the decoding algorithm.\\
 From now on, we set 
 \begin{equation}
 G_l:=\{(x_1-1)^{i_1}\dots(x_m-1)^{i_m}/ 0\leq i_1,\dots,i_m\leq 1, i_1+\dots+i_m=l\}
 \end{equation}
 \begin{prop}\ \ \\
 $G_l$ is a basis for the ideal $M^l$ $(0\leq l \leq m)$.
 \end{prop}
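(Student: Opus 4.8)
The plan is to prove the equality $\langle G_l\rangle = M^l$ by a double inclusion, where $\langle G_l\rangle$ denotes the ideal of $\mathcal{A}$ generated by $G_l$. It is convenient to abbreviate $y_i = x_i - 1$; since the ground field is $\mathbb{F}_2$ and $x_i^2 = 1$ in $\mathcal{A}$, one gets $y_i^2 = x_i^2 - 2x_i + 1 = 0$, so every element of $B_l$ and of $G_l$ is a \emph{squarefree} product $y_{j_1}\cdots y_{j_s}$ of pairwise distinct $y_i$'s, the number $s$ of factors being the exponent sum $i_1+\dots+i_m$.

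For the inclusion $\langle G_l\rangle\subseteq M^l$, I would note that an element of $G_l$ has exponent sum exactly $l$, hence in particular its exponent sum is $\geq l$, so $G_l\subseteq B_l$. By Theorem \ref{thm4.1}, $B_l\subseteq M^l$, whence $G_l\subseteq M^l$; since $M^l$ is an ideal of $\mathcal{A}$, the ideal $\langle G_l\rangle$ generated by the subset $G_l\subseteq M^l$ is contained in $M^l$.

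The reverse inclusion carries the real content. By Theorem \ref{thm4.1}, $B_l$ is an $\mathbb{F}_2$-linear basis of $M^l$, so $M^l = \operatorname{span}_{\mathbb{F}_2}(B_l)$; since $\langle G_l\rangle$ is in particular an $\mathbb{F}_2$-subspace of $\mathcal{A}$, it suffices to check $B_l\subseteq\langle G_l\rangle$. A typical $b\in B_l$ is $b = y_{j_1}\cdots y_{j_s}$ with $j_1<\dots<j_s$ and $s\geq l$. I would split the index set and write $b = (y_{j_1}\cdots y_{j_l})\,(y_{j_{l+1}}\cdots y_{j_s})$, in which the first factor is a squarefree product of exactly $l$ of the $y_i$ and hence belongs to $G_l$, while the second factor lies in $\mathcal{A}$. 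Thus $b\in\langle G_l\rangle$, and the desired inclusion follows.

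The only delicate point, the main obstacle although a mild one, is the bookkeeping of this factorization inside $\mathcal{A}$: because $y_i^2 = 0$, one must ensure that the two factors are built from disjoint index sets, so that no $y_i$ is squared and the product genuinely reproduces $b$. This is automatic here, since $b$ is squarefree and we merely partition its index set $\{j_1,\dots,j_s\}$ into its first $l$ elements and the remaining ones. Combining the two inclusions yields $\langle G_l\rangle = M^l$, i.e.\ $G_l$ is an ideal-theoretic generating set (a ``basis'') for $M^l$.
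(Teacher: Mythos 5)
Your proof is correct and follows essentially the same route as the paper: both directions proceed by double inclusion, with the reverse inclusion obtained by writing each Jennings basis element of $B_l$ as a product $a\cdot g$ with $a\in\mathcal{A}$ and $g\in G_l$. You merely make explicit (via the squarefree factorization in the $y_i=x_i-1$) what the paper asserts in one line, which is a harmless elaboration, not a different argument.
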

\begin{proof}
 Since $G_l\subseteq M^l$ and $M^l$ is an ideal of $\mathcal{A}$, then $\mathcal{A}.g\subseteq M^l$ for all $g\in G_l$. Thus $\displaystyle{\sum_{g\in G_l}{}\mathcal{A}.g\subseteq M^l}$.\\
 Conversely, since $B_l$ is a linear basis of $M^l$ over $\mathbb{F}_2$, then each element of $M^l$ is a sum of elements in $B_l$. Every element of $B_l$ can be written as a product $a.g$, where $a\in \mathcal{A}$ and $g\in G_l$. Thus, we have $M^l\subseteq\displaystyle{\sum_{g\in G_l}{}\mathcal{A}.g}$.
\end{proof}
 Let us fix an integer $l$ such that $1\leq l\leq m$.
 We set
 \begin{equation}
  G=\big\{(X_1-1)^{i_1}\dots (X_m-1)^{i_m}\slash \  0\leq i_1,\dots,i_m\leq 1, i_1+\dots+i_m=l\big\}\subseteq\mathbb{F}_2[X_1,\dots,X_m]
  \end{equation}
   and
   \begin{equation}
   H=\{X_1^2-1,\dots, X_m^2-1\}\subseteq\mathbb{F}_2[X_1,\dots,X_m].
   \end{equation}
 Let $f(X)\in\mathbb{F}_2[X_1,\dots,X_m]$, then $f(X)=\displaystyle{\sum_{\alpha\in\Lambda}X^\alpha}$ where $\Lambda\in\mathbb{N}^m$, \ $X^\alpha=X_1^{\alpha_1}\dots X_m^{\alpha_m}$ and $\card(\Gamma)<+\infty$.\\
  Let $E=\{1,2,\dots, m\}$. For each subset $I\subseteq E$, we define $X_I=\displaystyle{\prod_{i\in I}X_i}$ with $X_\emptyset=1$, and \\ $g_I=\displaystyle{\prod_{i\in I}(X_i-1)}$ with $g_\emptyset=1$. The elements of $G$ are ordered as follows :\[g_I > g_J\iff X_I > X_J.\]
 \begin{prop}\ \ \\
 $G$ is a reduced Groebner basis for the ideal $\langle G\rangle\subseteq\mathbb{F}_2[X_1,\dots,X_m]$ \ \ $(0\leq l\leq m)$.
 \end{prop}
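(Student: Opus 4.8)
The plan is to apply Buchberger's criterion (the S-polynomial theorem stated above): I will show that for every pair $g_I\neq g_J$ in $G$ the remainder of $S(g_I,g_J)$ on division by $G$ is zero, and then verify the two conditions defining a reduced basis. The starting observation is that, in the grlex order with $X_1>\dots>X_m$, the leading term of $g_I=\prod_{i\in I}(X_i-1)$ is $\lt(g_I)=X_I$: expanding over $\mathbb{F}_2$ gives $g_I=\sum_{T\subseteq I}X_T$, whose unique term of top total degree $|I|=l$ is $X_I$. The computational backbone will be the identity, valid whenever $S\cap I=\emptyset$,
\[ X_S\,g_I=\sum_{T\subseteq S} g_{T\cup I}, \]
obtained by writing $X_i=(X_i-1)+1$ and multiplying out.

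The key lemma, and the step I expect to be the main obstacle, is to prove that for every $K\subseteq E$ with $|K|\ge l$ the remainder of $g_K$ on division by $G$ is zero. The difficulty is purely that the $g_K$ with $|K|>l$ do not themselves lie in $G$, so one must track the reduction without circularly invoking Corollary \ref{cor3} and the Groebner property one is trying to establish. I would argue by strong induction on $|K|$, using only the division algorithm and the $\mathbb{F}_2$-linearity of the remainder operation (Remark \ref{rem1}). For $|K|=l$ one has $g_K\in G$ and the remainder is visibly $0$. For $|K|>l$, choose $K'\subseteq K$ with $|K'|=l$; since $\lt(g_{K'})=X_{K'}$ divides $\lt(g_K)=X_K$, one division step replaces $g_K$ by $g_K-X_{K\setminus K'}g_{K'}$, and using $g_K=g_{K'}\,g_{K\setminus K'}$ together with the identity above this difference rewrites as an $\mathbb{F}_2$-sum of terms $g_{K''}$ with $l\le|K''|<|K|$. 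By linearity the remainder of $g_K$ equals the sum of the remainders of these $g_{K''}$, each of which vanishes by the induction hypothesis.

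Granting the lemma, the Groebner property is immediate. Writing $B=I\setminus J$ and $C=J\setminus I$, the S-polynomial is $S(g_I,g_J)=X_C\,g_I+X_B\,g_J$ (signs are irrelevant over $\mathbb{F}_2$), and two applications of the identity express it as an $\mathbb{F}_2$-sum of terms $g_K$ with $|K|\ge l$. By the lemma and linearity of the remainder, $S(g_I,g_J)$ reduces to $0$, so $G$ is a Groebner basis for $\langle G\rangle$. I note that this single computation handles the disjoint case $I\cap J=\emptyset$ and the overlapping case uniformly, so no separate coprimality argument is needed.

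Finally I would check that $G$ is reduced. The condition $\lc(g_I)=1$ is automatic over $\mathbb{F}_2$. For the second condition, $\langle \lt(G\setminus\{g_I\})\rangle=\langle X_J : |J|=l,\ J\neq I\rangle$, and the monomials of $g_I$ are exactly the $X_T$ with $T\subseteq I$, so $|T|\le l$. If such an $X_T$ were divisible by some $X_J$ with $|J|=l$ and $J\neq I$, then $J\subseteq T\subseteq I$ would force $|T|=l$ and hence $J=T=I$, a contradiction. Thus no monomial of $g_I$ lies in $\langle \lt(G\setminus\{g_I\})\rangle$, which completes the verification.
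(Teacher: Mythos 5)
Your proof is correct and follows essentially the same route as the paper's: your identity $X_S\,g_I=\sum_{T\subseteq S}g_{T\cup I}$ is exactly the paper's expansion $\prod_{i\in S}\left[(X_i-1)+1\right]g_I$ of the S-polynomials, combined with linearity of the remainder (Remark \ref{rem1}). If anything, yours is more complete: your inductive lemma on $|K|$ supplies a justification for the step the paper merely asserts (``it is clear that $\overline{g_K.g_I}^G=0$''), and you verify the two conditions for reducedness, which the paper's proof does not address at all.
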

\begin{proof} Let $I, J\subseteq E$ such that $\card( I )=\card( J)= l$. We have
 \begin{align*}
 S(g_I,g_J) & =\dfrac{\lcm(\lm(g_I),\lm(g_J))}{\lt(g_I)}.g_I-\dfrac{\lcm(\lm(g_I),\lm(g_J))}{\lt(g_J)}.g_I\\
 & =X_{(I\cup J)\backslash I}.g_I-X_{(I\cup J)\backslash J}.g_J\\
 &= \prod_{i\in(I\cup J)\backslash I}[(X_i-1)+1].g_I-\prod_{i\in(I\cup J)\backslash J}[(X_i-1)+1].g_J\\
 & = (\sum_{K\in\mathcal{P}((I\cup J)\backslash I)}g_K)g_I-(\sum_{K\in\mathcal{P}((I\cup J)\backslash J)}g_K)g_J\\
 & = \sum_{K\in\mathcal{P}((I\cup J)\backslash I)}g_K.g_I-\sum_{K\in\mathcal{P}((I\cup J)\backslash J)}g_K.g_J
 \end{align*}
 It is clear that $\overline{g_K.g_I}^G=0$ for all 
 $K\in\mathcal{P}((I\cup J)\backslash I)$ and $\overline{g_K.g_J}^G=0$ for all 
  $K\in\mathcal{P}((I\cup J)\backslash J)$ and by remark \ref{rem1}, we have $\overline{S(g_I,g_J)}^G=0$.
\end{proof}
  For each subset $I\subseteq E$, we define $\widehat{I}\subseteq \mathcal{P}(I)$ by $\overline{X_I}^G=\displaystyle{\sum_{L\in\widehat{I}}}X_L$.
  \begin{rem}
  For $f(X)\in\mathbb{F}_2[X_1,\dots,X_m]$, if $r(X)=\overline{f}^H$, then $r(X)=\displaystyle{\sum_{\alpha\in\Lambda}X^\alpha}$, $\Lambda\subseteq(\{0, 1\})^m$, $X^\alpha=X_1^{\alpha_1}\dots X_m^{\alpha_m}$ and $f(X)+\mathcal{I}=r(X)+\mathcal{I}$. Since $x_1=X_1+\mathcal{I},\dots,x_m=X_m+\mathcal{I}$, then $r(X)+\mathcal{I}=r(x)=\displaystyle{\sum_{\alpha\in\Lambda}x^\alpha}$ where $x^\alpha=x_1^{\alpha_1}\dots x_m^{\alpha_m}$.
   \end{rem}
   \begin{rem}\label{rem5}
   If $f(X)=\displaystyle{\sum_{\alpha\in\Gamma}X^\alpha}\in\mathbb{F}_2[X_1,\dots,X_m]$ with $\Gamma\subseteq(\{0, 1\})^m$, then $\overline{f(X)}^H=f(X)$ and $\overline{f(X)}^{G\cup H}=\overline{f(X)}^G$. Thus, in this case $f(X)\in\langle G\rangle\iff f(X)\in\langle G\cup H\rangle$. 
   
   \end{rem}
   
   Consider the ideal $\langle G\cup H\rangle\slash\mathcal{I}$ of $\mathcal{A}=\mathbb{F}_2[X_1,\dots,X_m]\slash\mathcal{I}$, we have $M^l=\langle G\cup H\rangle\slash\mathcal{I}$.\\ For each $f(x)=\displaystyle{\sum_{\alpha\in\Gamma}a_\alpha x^\alpha\in \mathcal{A}}$ where $\Gamma\subseteq(\{0, 1\})^m$, we always take $f(X)=\displaystyle{\sum_{\alpha\in\Gamma}a_\alpha X^\alpha}$ $\in$ $\mathbb{F}_2[X_1,\dots,X_m]$ as its representative modulo $\mathcal{I}=\langle  H\rangle$, and we denote by $\overline{f}^G$ the remainder on division of $f(X)$ by $G$.
   \begin{prop}
   Let $f(x)\in \mathcal{A}$. Then, \[f(x)\in M^l\ \ \text{if and only if}\ \ \overline{f}^G=0.\]
    \end{prop}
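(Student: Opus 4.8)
The plan is to reduce the membership question in the quotient ring $\mathcal{A}$ to a membership question in the free polynomial ring $\mathbb{F}_2[X_1,\dots,X_m]$, and then apply the Groebner basis criterion to $G$. The whole argument is a chain of three equivalences, each supplied by a result stated earlier in the paper.

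First I would use the identification $M^l=\langle G\cup H\rangle\slash\mathcal{I}$ recorded just before the statement, together with the standard correspondence between ideals of the quotient $\mathcal{A}=\mathbb{F}_2[X_1,\dots,X_m]\slash\mathcal{I}$ and ideals of $\mathbb{F}_2[X_1,\dots,X_m]$ that contain $\mathcal{I}=\langle H\rangle$. Since $H\subseteq G\cup H$ we have $\mathcal{I}\subseteq\langle G\cup H\rangle$, so this correspondence applies. Writing $f(X)=\sum_{\alpha\in\Gamma}a_\alpha X^\alpha$ with $\Gamma\subseteq(\{0,1\})^m$ for the fixed reduced representative of $f(x)$ modulo $\mathcal{I}$, this gives $f(x)\in M^l\iff f(X)\in\langle G\cup H\rangle$.

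Next, because $f(X)$ is in reduced form (each exponent lies in $\{0,1\}$), Remark \ref{rem5} applies and yields $f(X)\in\langle G\rangle\iff f(X)\in\langle G\cup H\rangle$. This is the step that lets me discard the relations $H$ entirely and work with $G$ alone in the free polynomial ring. Finally I would invoke the preceding proposition, which asserts that $G$ is a (reduced) Groebner basis for $\langle G\rangle$; by Corollary \ref{cor3} this gives $f(X)\in\langle G\rangle\iff\overline{f(X)}^G=0$. Since by definition $\overline{f}^G=\overline{f(X)}^G$, chaining the three equivalences produces $f(x)\in M^l\iff\overline{f}^G=0$, as desired.

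The step that needs the most care is the first one: making precise that membership of $f(x)$ in the quotient ideal $M^l=\langle G\cup H\rangle\slash\mathcal{I}$ is equivalent to membership of the chosen representative $f(X)$ in the ideal $\langle G\cup H\rangle$ of the polynomial ring, which relies on $\mathcal{I}\subseteq\langle G\cup H\rangle$. A related subtlety worth flagging is that the representative must be the \emph{reduced} one, so that Remark \ref{rem5} is legitimately applicable; for an arbitrary representative one would first reduce modulo $H$. Everything after that is a direct citation of Remark \ref{rem5} and Corollary \ref{cor3}.
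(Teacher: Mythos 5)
Your proof is correct and takes essentially the same route as the paper's: the paper likewise first states the equivalence $f(x)\in M^l \iff f(X)\in\langle G\cup H\rangle$ and then concludes by citing Remark \ref{rem5} and Corollary \ref{cor3}. Your version simply makes explicit the quotient-ideal correspondence (using $\mathcal{I}\subseteq\langle G\cup H\rangle$) and the role of the reduced representative, details the paper leaves implicit.
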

\begin{proof} We have $f(x)\in M^l$ if and only if $ f(X)\in\langle G\cup H\rangle$. Thus, the Proposition follows from Remark \ref{rem5} and Corollary \ref{cor3}.
\end{proof}
\begin{prop}
    The subset $H=\{X_1^2-1,\dots,X_m^2-1\}\subseteq\mathbb{F}_2[X_1,\dots,X_m]$ is a Groebner basis for the ideal $\langle H \rangle$.
\end{prop}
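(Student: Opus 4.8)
The plan is to apply the S-polynomial criterion (the theorem stated earlier characterizing Groebner bases via S-polynomials) directly to the two-element type of pair arising from $H$. The set $H=\{h_1,\dots,h_m\}$ with $h_i=X_i^2-1$ consists of polynomials whose leading terms, under the grlex order, are $\lt(h_i)=X_i^2$. The crucial structural observation is that for $i\neq j$ the leading monomials $X_i^2$ and $X_j^2$ involve disjoint variables, so their least common multiple is simply the product $X_i^2X_j^2=\lt(h_i)\cdot\lt(h_j)$. This is precisely the situation in which leading terms are relatively prime, and it is exactly this coprimality that will make every S-polynomial reduce to zero.

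First I would compute $S(h_i,h_j)$ explicitly. With $X^\gamma=\lcm(\lm(h_i),\lm(h_j))=X_i^2X_j^2$, we get
\begin{align*}
S(h_i,h_j)&=\frac{X_i^2X_j^2}{X_i^2}h_i-\frac{X_i^2X_j^2}{X_j^2}h_j\\
&=X_j^2(X_i^2-1)-X_i^2(X_j^2-1)\\
&=X_i^2-X_j^2\quad(\text{in characteristic }2).
\end{align*}
Next I would divide this remainder by $H$: since $\lt(h_i)=X_i^2$ divides $X_i^2$, one reduction step replaces $X_i^2$ by $1$, and likewise $X_j^2$ reduces to $1$, so that $\overline{S(h_i,h_j)}^H = 1-1 = 0$. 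Because this holds for every pair $i\neq j$, the S-polynomial criterion immediately yields that $H$ is a Groebner basis for $\langle H\rangle$.

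The main point to get right is not difficulty but bookkeeping: one must confirm that $H$ is indeed a \emph{basis} of the ideal $\langle H\rangle$ (which is automatic, since $\langle H\rangle$ is by definition generated by $H$, so the hypothesis of the S-polynomial theorem is satisfied) and that the leading terms are identified correctly under grlex, namely $\lt(X_i^2-1)=X_i^2$ rather than the constant term. The only genuine subtlety is the sign/coefficient handling in $\mathbb{F}_2$, where $X_i^2-X_j^2$ and $X_i^2+X_j^2$ coincide; I expect no real obstacle here, as the coprimality of the leading terms makes the reduction to zero essentially forced. Indeed, there is a standard general lemma stating that whenever the leading terms of two generators are relatively prime the corresponding S-polynomial reduces to zero, and invoking it would give an even shorter proof; I would nonetheless present the direct computation above for self-containedness.
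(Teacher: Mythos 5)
Your proposal is correct and follows essentially the same route as the paper: both apply the S-polynomial criterion to each pair $h_i=X_i^2-1$, $h_j=X_j^2-1$, compute $S(h_i,h_j)=X_j^2(X_i^2-1)-X_i^2(X_j^2-1)$, and verify the remainder on division by $H$ is zero (your simplified form $X_i^2+X_j^2$ coincides over $\mathbb{F}_2$ with the paper's expression $(X_i^2-1)+(X_j^2-1)$). Your closing remark that the coprimality of the leading terms $X_i^2$, $X_j^2$ would let one invoke the standard first Buchberger criterion is a valid shortcut the paper does not use, but your explicit computation matches the paper's argument.
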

\begin{proof}
Let  $i$ and $j\in\{1,\dots,m\}$ with $i<j$, we have
\begin{align*} S(X_i^2-1,X_j^2-1) &=\dfrac{X_i^2X_j^2}{X_i^2}(X_i^2-1)-\dfrac{X_i^2X_j^2}{X_j^2}(X_j^2-1)\\ &=X_j^2(X_i^2-1)-X_i^2(X_j^2-1)\\ &=(X_j^2-1+1)(X_i^2-1)-(X_i^2-1+1)(X_j^2-1)\\
    & =(X_j^2-1)(X_i^2-1)+(X_i^2-1)- (X_i^2-1)(X_j^2-1)+(X_j^2-1)\\
    & = (X_i^2-1)+(X_j^2-1)\in\langle H \rangle
\end{align*}
    Thus, $\overline{S(X_i^2-1,X_j^2-1)}^H=0$.
\end{proof}
    \begin{prop}
    $G\cup H$ is a Groebner basis for the ideal $\langle G\cup H\rangle\subseteq\mathbb{F}_2[X_1,\dots,X_m]$.
    \end{prop}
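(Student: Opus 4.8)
The plan is to apply Buchberger's criterion (the $S$-polynomial theorem of Section~3) directly to the generating set $G\cup H$: it suffices to show that for every pair $p,q\in G\cup H$ the $S$-polynomial $S(p,q)$ reduces to zero on division by $G\cup H$. I would split the pairs into three families according to whether both $p,q$ lie in $G$, both lie in $H$, or one lies in each.

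For a pair $p,q\in G$ the Proposition that $G$ is a (reduced) Groebner basis for $\langle G\rangle$ already gives $\overline{S(p,q)}^{G}=0$; since $G\subseteq G\cup H$, the same reductions are available when dividing by $G\cup H$ (list the $G$-elements first), so $\overline{S(p,q)}^{G\cup H}=0$. In exactly the same way, for $p,q\in H$ the Proposition stating that $H$ is a Groebner basis gives $\overline{S(p,q)}^{H}=0$, hence $\overline{S(p,q)}^{G\cup H}=0$. Thus the two ``pure'' families cost nothing beyond the two results just established.

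The content lies in the mixed pairs $S(g_I,X_j^2-1)$ with $|I|=l$ and $j\in E$, where $\lm(g_I)=X_I$ is squarefree of degree $l$ and $\lm(X_j^2-1)=X_j^2$. If $j\notin I$ these leading monomials are coprime, and writing $X^\gamma=X_IX_j^2$ together with $X_j^2=(X_j^2-1)+1$ one rearranges $S(g_I,X_j^2-1)=(X_j^2-1)(g_I-X_I)+g_I$, an explicit $\langle G\cup H\rangle$-combination all of whose terms have multidegree below $X^\gamma$. The case I expect to be the genuine obstacle is $j\in I$, where $\gcd(X_I,X_j^2)=X_j\neq 1$ and the coprimality shortcut is unavailable. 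Writing $I=I'\cup\{j\}$ and $g_I=(X_j-1)g_{I'}$, I would use the identity $X_j(X_j-1)=(X_j^2-1)-(X_j-1)$ to obtain
\[
S(g_I,X_j^2-1)=(X_j^2-1)\big(g_{I'}-X_{I'}\big)-g_I,
\]
and then observe that the two degree-$(l+1)$ leading terms $X_j^2X_{I'}$ cancel, so that every summand on the right has multidegree strictly below $X^\gamma=X_j^2X_{I'}$.

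In each mixed case the identity thus exhibits $S(g_I,X_j^2-1)$ as a combination of elements of $G\cup H$ whose terms all have multidegree $<X^\gamma$; by the standard-representation form of Buchberger's criterion in \cite{clo} this forces the remainder on division by $G\cup H$ to vanish. Having handled all three families of pairs, Buchberger's criterion gives that $G\cup H$ is a Groebner basis for $\langle G\cup H\rangle$. The only delicate point is the overlap case $j\in I$, where the cancellation of the top-degree terms must be checked by hand instead of invoked through coprimality.
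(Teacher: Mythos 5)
Your proposal is correct and takes essentially the paper's approach: both proofs run Buchberger's criterion, dispose of the pure $G$--$G$ and $H$--$H$ pairs via the two preceding propositions, and handle the mixed pairs $S(g_I,X_j^2-1)$ (split into $j\in I$ and $j\notin I$) by rewriting with $X_i=(X_i-1)+1$, your two identities being exactly the paper's expansions after cancelling the top term, since in characteristic $2$ one has $(X_j-1)g_I=(X_j^2-1)g_{I'}$ and $g_{I'}-X_{I'}=\sum_{K\subsetneq I'}g_K$. If anything your bookkeeping is tighter at the one delicate point: the paper simply asserts $\overline{S(g_I,X_j^2-1)}^{G\cup H}=0$ from an ideal-membership expansion in which two summands have multidegree equal to the lcm $X^\gamma$, whereas your explicit cancellation and strict bound $<X^\gamma$, combined with the standard (lcm) representation form of Buchberger's criterion, legitimately force the zero remainder.
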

\begin{proof}  Let $I\subseteq E=\{1,\dots, m\}$ with $\card(I)=l$. It's enough to prove that the remainder on division of $S(g_I,X_j^2-1)$ by $G\cup H$ is zero, where $g_I\in G$ and $X_j^2-1\in H$ for all $j=1,\dots, m$.\\
    For $j\in I$, we have
    \begin{align*} S(g_I,X_j^2-1) &=\dfrac{X_IX_{\{j\}}}{X_I}.g_I-\dfrac{X_IX_{\{j\}}}{X_j^2}(X_j^2-1)\\ &=X_j.g_I-X_{I\backslash\{j\}}(X_j^2-1)\\
     &=(X_j-1+1)g_I-(X_j^2-1)\displaystyle{\prod_{k\in I\backslash\{j\}}(X_k-1+1)}\\
        & =(X_j-1)g_I+g_I-[(X_j^2-1)(X_{k_1}-1)+\dots+(X_j^2-1)(X_{k_{l-1}}-1) \\ & \  +(X_j^2-1)(X_{k_1}-1)(X_{k_2}-1)+\dots+(X_j^2-1)\displaystyle{\prod_{k\in I\backslash\{j\}}(X_k-1)}+(X_j^2-1)]\\
        & \text{where}\  \{k_1,k_2,\dots,k_{l-1}\}= I\backslash\{j\} 
        \end{align*}
  For $j\in E\backslash I$, we have
        \begin{align*} S(g_I,X_j^2-1) &=\dfrac{X_IX_{\{j\}}X_{\{j\}}}{X_I}.g_I-\dfrac{X_IX_{\{j\}}X_{\{j\}}}{X_{\{j\}}X_{\{j\}}}(X_j^2-1)\\
         &=X_j^2.g_I-X_{I}(X_j^2-1)\\
         &=(X_j^2-1+1)g_I-(X_j^2-1)\displaystyle{\prod_{k\in I}(X_k-1+1)}\\
            & =(X_j^2-1)g_I+g_I-[(X_j^2-1)(X_{p_1}-1)+\dots+(X_j^2-1)(X_{p_{l}}-1) \\ & \  +(X_j^2-1)(X_{p_1}-1)(X_{p_2}-1)+\dots+(X_j^2-1)\displaystyle{\prod_{k\in I}(X_k-1)}+(X_j^2-1)]\\
            & \text{where}\  \{p_1,p_2,\dots,p_{l}\}= I. 
            \end{align*}
  Then for all $j$ and $I$, we have $\overline{S(g_I,X_j^2-1)}^{G\cup H}=0$.
\end{proof}
  \begin{prop}\label{prop5}
 Let $I, J\subseteq E=\{1,\dots,m\}$, $I\neq J$. We have $\overline{X_I+X_J}^G=\displaystyle{\sum_{L\in\widehat{I}\Delta\widehat{J}}}X_L$ in $\mathbb{F}_2[X]$ where $\widehat{I}\Delta\widehat{J}=(\widehat{I}\backslash\widehat{J})\cup(\widehat{J}\backslash\widehat{I})$.
  \end{prop}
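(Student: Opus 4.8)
The plan is to reduce everything to the $\mathbb{F}_2$-linearity of the remainder operation together with the definition of $\widehat{I}$. Since Proposition 5.2 establishes that $G$ is a (reduced) Groebner basis for $\langle G\rangle$, the remainder $\overline{\,\cdot\,}^G$ is well-defined independently of the ordering of $G$, and by Remark \ref{rem1} (equivalently, by part $(ii)$ of the Groebner remainder remark) it is linear over $\mathbb{F}_2$. Hence my first step is simply to write
\[
\overline{X_I+X_J}^G=\overline{X_I}^G+\overline{X_J}^G .
\]

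Next I would substitute the defining identities $\overline{X_I}^G=\sum_{L\in\widehat{I}}X_L$ and $\overline{X_J}^G=\sum_{L\in\widehat{J}}X_L$, which gives
\[
\overline{X_I+X_J}^G=\sum_{L\in\widehat{I}}X_L+\sum_{L\in\widehat{J}}X_L .
\]
The only point requiring a word of justification is that this sum collapses to the symmetric difference. For this I would observe that the monomials $\{X_L : L\subseteq E\}$ are pairwise distinct: each $X_L=\prod_{i\in L}X_i$ is a squarefree monomial, and distinct subsets $L$ yield distinct squarefree monomials. Therefore addition in $\mathbb{F}_2[X_1,\dots,X_m]$ is performed coefficientwise modulo $2$, and the coefficient of a given $X_L$ in the right-hand side is the residue of $[L\in\widehat{I}]+[L\in\widehat{J}]$ modulo $2$.

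Working in characteristic $2$, where $1+1=0$, this coefficient equals $1$ precisely when $L$ lies in exactly one of $\widehat{I}$, $\widehat{J}$, that is, when $L\in\widehat{I}\Delta\widehat{J}$; every $L$ belonging to both sets cancels. Consequently
\[
\sum_{L\in\widehat{I}}X_L+\sum_{L\in\widehat{J}}X_L=\sum_{L\in\widehat{I}\Delta\widehat{J}}X_L,
\]
which is the claimed identity. I do not expect any genuine obstacle here: the statement is essentially a bookkeeping consequence of $\mathbb{F}_2$-linearity of the normal-form map and of the distinctness of the squarefree monomials $X_L$, the latter guaranteeing that no cancellation occurs beyond that recorded by the symmetric difference.
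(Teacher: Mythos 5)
Your proposal is correct and follows essentially the same route as the paper: linearity of the remainder map over $\mathbb{F}_2$ (Remark \ref{rem1}) applied to $\overline{X_I}^G=\sum_{L\in\widehat{I}}X_L$ and $\overline{X_J}^G=\sum_{L\in\widehat{J}}X_L$, followed by cancellation in characteristic $2$ of the terms indexed by $\widehat{I}\cap\widehat{J}$. The only difference is presentational --- the paper cancels by decomposing $\widehat{I}$ and $\widehat{J}$ into the pieces $\widehat{I}\backslash\widehat{J}$, $\widehat{J}\backslash\widehat{I}$, $\widehat{I}\cap\widehat{J}$, whereas you count the coefficient of each squarefree monomial $X_L$ modulo $2$, and your explicit remark that distinct subsets $L$ yield distinct monomials is a small point the paper leaves implicit.
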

\begin{proof}
  By Remark \ref{rem1}, we have $\overline{X_I+X_J}^G=\overline{X_I}^G+\overline{X_J}^G=\displaystyle{\sum_{L\in\widehat{I}}}X_L+\displaystyle{\sum_{L\in\widehat{J}}}X_L=\displaystyle{\sum_{L\in(\widehat{I}\backslash\widehat{J})\cup(\widehat{I}\cap\widehat{J})}}X_L+\displaystyle{\sum_{L\in(\widehat{J}\backslash\widehat{I})\cup(\widehat{I}\cap\widehat{J})}}X_L= \displaystyle{\sum_{L\in(\widehat{I}\backslash\widehat{J})}}X_L+\displaystyle{\sum_{L\in(\widehat{J}\backslash\widehat{I})}}X_L+2\displaystyle{\sum_{L\in(\widehat{I}\cap\widehat{J})}}X_L=\displaystyle{\sum_{L\in(\widehat{I}\backslash\widehat{J})\cup(\widehat{J}\backslash\widehat{I})}}X_L=\displaystyle{\sum_{L\in(\widehat{I}\Delta\widehat{J})}}X_L$.
\end{proof}
\begin{prop}
 Let $I\subseteq E$, \\
$-$ if $\card(I)<l$, then $\widehat{I}=\{I\}$.\\
 $-$ if $\card(I)=l$, then $\widehat{I}=\mathcal{P}(I)-\{I\}$ and $\omega(\overline{X_I}^G)>t$.
 
\end{prop}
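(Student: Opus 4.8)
The plan is to reduce everything to the single observation that, for the graded lexicographic order with $X_1 > \cdots > X_m$, the leading term of each generator is $\lt(g_I) = X_I$, so the set of leading terms of $G$ is exactly the set of squarefree monomials of degree $l$. I would record this first, because it gives a clean divisibility criterion: a squarefree monomial $X_J$ is divisible by some $\lt(g_{I'})$ with $\card(I') = l$ precisely when $I'$ is a size-$l$ subset of $J$, i.e. precisely when $\card(J) \geq l$.

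For the case $\card(I) < l$, the monomial $X_I$ is squarefree of degree strictly less than $l$, hence divisible by no leading term of $G$ (a subset $I' \subseteq I$ with $\card(I') = l$ would force $l \leq \card(I)$). The division algorithm then returns $X_I$ unchanged, so $\overline{X_I}^G = X_I$, which is exactly $\widehat{I} = \{I\}$.

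For the case $\card(I) = l$, the key computation is the product expansion $g_I = \prod_{i \in I}(X_i - 1) = \sum_{L \subseteq I} X_L$ over $\mathbb{F}_2$, where all signs collapse to $+1$. Since $g_I \in G \subseteq \langle G\rangle$, Corollary \ref{cor3} gives $\overline{g_I}^G = 0$; isolating the top term yields $X_I \equiv \sum_{L \subsetneq I} X_L$ modulo $\langle G\rangle$. Now I would invoke the $\mathbb{F}_2$-linearity of the remainder map (Remark \ref{rem1}) together with the first case: every proper subset $L \subsetneq I$ has $\card(L) < l$, so $\overline{X_L}^G = X_L$, and therefore $\overline{X_I}^G = \sum_{L \subsetneq I} X_L$. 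This is precisely $\widehat{I} = \mathcal{P}(I) - \{I\}$.

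Finally, for the weight bound I would simply count: the reduced remainder has one squarefree monomial for each proper subset of $I$, and distinct squarefree monomials correspond to distinct coordinates under the identification of reduced polynomials with codewords, so the Hamming weight equals the number of terms, $\omega(\overline{X_I}^G) = \card(\widehat{I}) = 2^l - 1$. Since $t$ is the largest integer with $2t + 1 \leq 2^l$, we have $t = 2^{l-1} - 1$, and $2^l - 1 > 2^{l-1} - 1$ gives $\omega(\overline{X_I}^G) > t$. I expect no serious obstacle: the entire argument is bookkeeping, and the only two points deserving care are the collapse of signs in the binomial expansion over $\mathbb{F}_2$ and the remark that the number of monomials of a reduced polynomial equals the Hamming weight of the associated codeword.
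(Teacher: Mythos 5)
Your proposal is correct and follows essentially the same route as the paper: both rest on the decomposition $X_I = g_I + \sum_{L\subsetneq I}X_L$ over $\mathbb{F}_2$ and the observation that no monomial $X_L$ with $\card(L)<l$ is divisible by any $\lt(g)$, $g\in G$, followed by the count $2^l-1>t$. The only cosmetic difference is that you certify the remainder via the linearity of Remark \ref{rem1} combined with your first case, while the paper invokes the uniqueness of the remainder from Proposition \ref{prop1} directly --- both are valid one-step justifications of the same identity.
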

\begin{proof}
Suppose that $\card(I)<l$. Since $X_I$ is not divisible by any of $\lt(g)$,   $g\in G$, then $\overline{X_I}^G=X_I$.\\
 In the second case, we have $g_I= \displaystyle{\sum_{L\in\mathcal{P}(I)}}X_L $. Then we can write $X_I= g_I+ \displaystyle{\sum_{L\in\mathcal{P}(I)\backslash\{I\}}}X_L$. It is clear that no term of $\displaystyle{\sum_{L\in\mathcal{P}(I)\backslash\{I\}}}X_L$ is divisible by any of $\lt(g)$,   $g\in G$. Furthermore, we have $g_I\in \langle G\rangle$. Thus, by Proposition \ref{prop1},       $\overline{X_I}^G=\displaystyle{\sum_{L\in\mathcal{P}(I)\backslash\{I\}}}X_L$. We have $\omega(\displaystyle{\sum_{L\in\mathcal{P}(I)\backslash\{I\}}}X_L)=\card(\mathcal{P}(I)\backslash\{I\})=2^{\card(I)}-1=2^l-1$. Since $t$ is the maximal integer such that $2t+1\leq 2^l$, i.e $2t\leq 2^l-1$, thus $t<2^l-1$. Then $\omega(\overline{X_I}^G)>t$.
\end{proof}
  
\begin{prop}\label{prop8}
 Let $c(x)\in M^l$ be a transmitted codeword and $v(x)\in \mathcal{A}$ the received vector. Since our code is t-errors correcting, we write $v(x)=c(x)+e(x)$ with $\omega(e)\leq t$. Thus, we have $\overline{v}^G=\overline{e}^G$. And \\
 $-$ \ if $\overline{v}^G=0$, then $c=v$\\
 $-$ \ if $\overline{v}^G=\displaystyle{\sum_{L\in(\dots((\widehat{I_1}\Delta\widehat{I_2})\Delta\widehat{I_3})\Delta\dots\Delta\widehat{I_k})}}x_L$ with $k\leq t$ and $I_1,I_2,\dots, I_k\subseteq E$ are pairwise distincts, then $e(x)=x_{I_1}+x_{I_2}+\dots+x_{I_k}$ i.e $v(x)$ contains $k$ errors located at $x_{I_1},x_{I_2},\dots,x_{I_k}$.
  \end{prop}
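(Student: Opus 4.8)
The plan is to treat the three assertions separately, relying on two ingredients from earlier in the paper: the linearity of the remainder map over $\mathbb{F}_2$ (Remark \ref{rem1}) together with the characterization $f(x)\in M^l \iff \overline{f}^G=0$, and the minimum-distance value $d=2^l$ together with the defining inequality $2t+1\leq 2^l$ (which gives $2t<2^l=d$, hence also $t<d$).

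First I would prove $\overline{v}^G=\overline{e}^G$. Substituting $v(x)=c(x)+e(x)$ and using linearity of the remainder yields $\overline{v}^G=\overline{c}^G+\overline{e}^G$; since $c(x)\in M^l$, the characterization above gives $\overline{c}^G=0$, so $\overline{v}^G=\overline{e}^G$. For the first bullet, assume $\overline{v}^G=0$. Then $\overline{e}^G=0$, so $e(x)\in M^l$, i.e.\ $e$ is a codeword. As $\omega(e)\leq t<2^l=d$ and the only codeword of weight strictly below the minimum distance is the zero word, we get $e=0$, hence $c=v$.

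For the second bullet I would build the candidate $\tilde{e}(x)=x_{I_1}+\dots+x_{I_k}$. Because $I_1,\dots,I_k$ are pairwise distinct subsets of $E$, the corresponding monomials are distinct, so $\omega(\tilde{e})=k\leq t$. Applying linearity together with the single-monomial formula $\overline{x_I}^G=\sum_{L\in\widehat{I}}x_L$ to each summand, and using that over $\mathbb{F}_2$ a monomial survives exactly when it occurs an odd number of times, the iterated version of Proposition \ref{prop5} gives
\[
\overline{\tilde{e}}^G=\sum_{L\in\widehat{I_1}\Delta\widehat{I_2}\Delta\dots\Delta\widehat{I_k}}x_L=\overline{v}^G=\overline{e}^G.
\]

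The crux, and the step I expect to be the main obstacle, is to pass from ``$\tilde{e}$ and the true error $e$ have the same remainder'' to the equality $\tilde{e}=e$. I would prove that the remainder map is injective on words of weight at most $t$: if $e_1,e_2$ both have weight $\leq t$ and $\overline{e_1}^G=\overline{e_2}^G$, then $\overline{e_1+e_2}^G=0$ by linearity, so $e_1+e_2\in M^l$; but $\omega(e_1+e_2)\leq\omega(e_1)+\omega(e_2)\leq 2t<2^l=d$ forces $e_1+e_2=0$, whence $e_1=e_2$. Applying this to $e$ and $\tilde{e}$, both of weight $\leq t$ and sharing the remainder $\overline{v}^G$, yields $e(x)=x_{I_1}+\dots+x_{I_k}$; thus $v(x)$ carries exactly $k$ errors, located at $x_{I_1},\dots,x_{I_k}$.
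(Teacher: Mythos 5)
Your proof is correct and follows essentially the same route as the paper: both establish $\overline{v}^G=\overline{e}^G$ by linearity and $\overline{c}^G=0$, dispatch the first bullet via $\omega(e)\leq t<2^l=d_{\min}(M^l)$, and settle the second by comparing the candidate error $x_{I_1}+\dots+x_{I_k}$ with the true one through the iterated form of Proposition \ref{prop5} and the bound $k+t\leq 2t<2^l$. Your ``injectivity of the remainder map on words of weight at most $t$'' lemma is merely a repackaging of the paper's computation $e'-e=c-c'\in M^l$ followed by the same weight estimate, so the two arguments coincide.
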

\begin{proof}
  Since $c\in M^l$ and $G$ is a Groebner basis of $\langle G \rangle$ then $\overline{c}^G=0$. It follows that $\overline{v}^G=\overline{c+e}^G= \overline{c}^G+\overline{e}^G=\overline{e}^G$.\\
  $-$ If $\overline{v}^G=0$, then $\overline{e}^G=0$. Thus $e(x)\in M^l$. And since $\omega(e)\leq t<2^l=d_{\min}(M^l)$, then $e=0$. So we have $c=v$.\\
  $-$ If $\overline{v}^G=\displaystyle{\sum_{L\in(\dots((\widehat{I_1}\Delta\widehat{I_2})\Delta\widehat{I_3})\Delta\dots\Delta\widehat{I_k})}}X_L$, by Proposition \ref{prop5}, we  have $\overline{v}^G=\overline{X_{I_1}+X_{I_2}+\dots+X_{I_k}}^G$. Thus, $v(X)-(X_{I_1}+X_{I_2}+\dots+X_{I_k})=c^\prime(X)\in \langle G\rangle$, i.e  $\overline{c^\prime(X)}^G=0$, then  $c^\prime(x)\in M^l$. We have $v(x)=c^\prime(x)+e^\prime(x)$ with $e^\prime=x_{I_1}+x_{I_2}+\dots+x_{I_k}$.\\ Since $v(x)=c(x)+e(x)$, then $c(x)+e(x)=c^\prime(x)+e^\prime(x)$, this implies that $e^\prime(x)-e(x)=c(x)-c^\prime(x)\in M^l$. On the other hand, we have $\omega(e^\prime(x)-e(x))\leq \omega(e^\prime(x))+\omega(e(x))\leq k+t\leq 2t<2^l=d_{\min}(M^l)$. Thus, $e^\prime(x)-e(x)=0$, i.e $e^\prime(x)=e(x)$.
\end{proof}
\begin{prop}
 Let $v(x)=c(x)+e(x)$ with $c(x)\in M^l$ and $\omega(e)\leq t$ , if $\omega(\overline{v}^G)=k\leq t$, then $e(X)=\overline{v(X)}^G$, and $e=x_{I_1}+x_{I_2}+\dots+x_{I_k}$ such that $\forall \tau=1,\dots, k$, $\card(I_\tau)<l$.\\
  If $\omega(\overline{v(X)}^G)>t$, then there exists at least one index $\tau\in\{1,\dots,k\}$ with $k\leq t$ such that $\card(I_\tau)\geq l$ and $e=\displaystyle{\sum_{j=1}^{k}x_{I_j}}$ where $I_1,\dots, I_k\subseteq E$.
\end{prop}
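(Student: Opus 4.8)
The plan is to turn the weight of the normal form $\overline{v}^G$ into information about the error $e$, using the identity $\overline{v}^G=\overline{e}^G$ from Proposition \ref{prop8} together with the minimum-distance bound $d_{\min}(M^l)=2^l$. The starting observation I would isolate is structural: because the chosen representative $v(X)$ is square-free and reduction by an element $g_J\in G$ (with $\card(J)=l$) rewrites a square-free monomial as an $\mathbb{F}_2$-combination of strictly lower-degree square-free monomials, the normal form $\overline{v}^G$ is again a sum of square-free monomials, none of which is divisible by any leading term $\lt(g_J)=X_J$. For a square-free monomial $X_L$ this non-divisibility is exactly the condition $\card(L)<l$; hence every monomial appearing in $\overline{v}^G$ has $\card(L)<l$, and $\omega(\overline{v}^G)$ equals its number of terms.

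For the first assertion I would assume $\omega(\overline{v}^G)=k\leq t$ and set $e'(x):=\overline{v}^G$, now read as an element of $\mathcal{A}$. By the observation above, $e'$ is a sum of $k$ distinct monomials $x_L$ with $\card(L)<l$, so $\omega(e')=k$, and since $e'(X)=\overline{v}^G$ is already reduced, $\overline{e'}^G=\overline{v}^G$. Linearity of reduction (Remark \ref{rem1}) combined with $\overline{e}^G=\overline{v}^G$ then yields $\overline{e-e'}^G=\overline{e}^G-\overline{e'}^G=0$, so $e-e'\in M^l$ by the membership criterion. The decisive estimate is $\omega(e-e')\leq\omega(e)+\omega(e')\leq t+k\leq 2t<2^l=d_{\min}(M^l)$, which forces $e-e'=0$. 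Thus $e=e'=\overline{v}^G$, that is $e(X)=\overline{v(X)}^G$, and its support consists of $k$ sets $I_\tau$ each with $\card(I_\tau)<l$.

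For the second assertion I would argue by contraposition. Writing the error as $e=\sum_{j=1}^{k}x_{I_j}$ with $k=\omega(e)\leq t$ and the $I_j$ distinct, suppose every location satisfied $\card(I_j)<l$. Then each $X_{I_j}$ is already in normal form, so linearity gives $\overline{v}^G=\overline{e}^G=\sum_{j=1}^{k}X_{I_j}$, a sum of $k$ distinct monomials, whence $\omega(\overline{v}^G)=k\leq t$. This contradicts $\omega(\overline{v}^G)>t$, so at least one index $\tau$ must satisfy $\card(I_\tau)\geq l$, while $e=\sum_{j=1}^{k}x_{I_j}$ with $I_1,\dots,I_k\subseteq E$ is simply the general form of a weight-$k$ error.

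The main obstacle is the first assertion, specifically the recognition that the syndrome $\overline{v}^G$, reinterpreted inside $\mathcal{A}$, is itself a low-weight vector congruent to $e$ modulo $M^l$; once this is seen, the bound $2^l$ on the minimum distance pins it down exactly and the decoding becomes immediate. The only point requiring genuine care is the preservation of square-freeness and the drop below degree $l$ under reduction by $G$, since this is what guarantees that $\omega(\overline{v}^G)$ literally counts error positions rather than some larger combinatorial quantity.
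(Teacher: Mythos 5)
Your proof is correct and follows essentially the same route as the paper: both arguments reduce to showing $\overline{v}^G-e\in M^l$ (you via linearity of reduction and the membership criterion $\overline{f}^G=0\iff f\in M^l$, the paper via the division identity $v(X)=c^\prime(X)+\overline{v(X)}^G$, which is the same fact) and then invoke $\omega(\overline{v}^G-e)\leq 2t<2^l=d_{\min}(M^l)$ to force $e=\overline{v}^G$, with the second assertion handled by the identical contraposition through $\overline{X_{I_j}}^G=X_{I_j}$ when $\card(I_j)<l$. The one genuine addition on your side is the explicit verification that reduction by $G$ preserves square-freeness, so that non-divisibility of the remainder's terms by the $\lt(g_J)=X_J$ really does yield $\card(I_\tau)<l$ --- a point the paper asserts without justification.
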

\begin{proof}
  We have $v(X)=c^\prime(X)+\overline{v(X)}^G$, with $c^\prime(x)\in M^l$, and $v=c+e$. So, $c+e=c^\prime+\overline{v}^G$. And thus, $\overline{v}^G-e=c-c^\prime\in M^l$. Since $\omega(\overline{v}^G)\leq t$ and $\omega(e)\leq t$, then $\omega(\overline{v}^G-e)\leq \omega(\overline{v}^G)+\omega(e)\leq 2t<2t+1\leq 2^l=d_{\min}(M^l)$. Thus, $\overline{v}^G-e=0$, i.e $e=\overline{v}^G$. Since no term of $\overline{v}^G$ is divisible by any of $\lt(g)$, $g\in G$, then we have $\card(I_\tau)<l$ for all $\tau=1,\dots, k$.\\
 For the second case, since $\overline{v}^G=\overline{e}^G$, then $\omega(\overline{e}^G)=\omega(\overline{v}^G)>t$.
    And since $\omega(e)\leq t$, then $e=\displaystyle{\sum_{j=1}^{k}X_{I_j}}$ with $k\leq t$ and $I_1,\dots, I_k\subseteq E$. If $\card(I_j)<l $ for all $j=1,\dots, k$, then $\overline{X_{I_j}}^G=X_{I_j}$ \ ($1\leq j\leq k$), and  $\overline{e}^G=\overline{\displaystyle{\sum_{j=1}^{k}X_{I_j}}}^G=\displaystyle{\sum_{j=1}^{k}\overline{X_{I_j}}^G} =\displaystyle{\sum_{j=1}^{k}X_{I_j}}=e$. Then, we have $\omega(\overline{e}^G)=\omega(e)=k\leq t$, a contradiction.
\end{proof}
  
\begin{prop}
 Let $v(x)=c(x)+e(x)$ with $c(x)\in M^l$ and $\omega(e)\leq t$, if $e=x_{i_1}+\dots+x_{i_k}$ with $k\leq t$ and there exists $\tau\in\{1,\dots, k\}$ such that $\card(I_\tau)\geq l$, then $\overline{v}^G>t$.
 \end{prop}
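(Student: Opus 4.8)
The plan is to argue by contradiction and reduce everything to the minimum-distance bound $d_{\min}(M^l)=2^l$, exactly in the spirit of the preceding proposition. First I would record the two structural facts that drive the argument. Since $c(x)\in M^l$ its normal form vanishes, so $\overline{v}^G=\overline{e}^G$; and the remainder $\overline{v}^G$, being reduced modulo $G$, contains no monomial $X_L$ divisible by any $\lt(g_J)=X_J$ with $\card(J)=l$. Consequently every monomial occurring in $\overline{v}^G$ is of the form $X_L$ with $\card(L)<l$.

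Next I would assume, toward a contradiction, that $\omega(\overline{v}^G)\leq t$. As in the previous proposition, I would show that $\overline{v}^G-e$ lies in $M^l$: indeed $e-\overline{e}^G\in\langle G\rangle$, which passes to an element of $M^l=\langle G\cup H\rangle/\mathcal{I}$ after reduction modulo $\mathcal{I}$ by Remark \ref{rem5}, and $\overline{v}^G=\overline{e}^G$. By subadditivity of the Hamming weight (over $\mathbb{F}_2$, $\overline{v}^G-e=\overline{v}^G+e$), one gets $\omega(\overline{v}^G-e)\leq \omega(\overline{v}^G)+\omega(e)\leq 2t<2^l=d_{\min}(M^l)$. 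Since any nonzero codeword of $M^l$ has weight at least $2^l$, this forces $\overline{v}^G-e=0$, i.e. $e=\overline{v}^G$.

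Then I would extract the contradiction from the monomial support. Because $e=\overline{v}^G$ and every monomial of $\overline{v}^G$ has index set of cardinality strictly less than $l$, the unique expansion of $e$ in the standard basis $\{x_I/\,I\subseteq E\}$ of $\mathcal{A}$ would involve only terms $x_I$ with $\card(I)<l$. This contradicts the hypothesis that some $x_{I_\tau}$ appearing in $e$ satisfies $\card(I_\tau)\geq l$. Hence the assumption $\omega(\overline{v}^G)\leq t$ is untenable, and $\omega(\overline{v}^G)>t$, as claimed.

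The step requiring the most care is the identification $\overline{v}^G-e\in M^l$, which is where I must pass correctly between the remainder on division by $G$ in $\mathbb{F}_2[X_1,\dots,X_m]$ and the cosets of the quotient $\mathcal{A}$; this is precisely the role of Remark \ref{rem5} together with the equality $M^l=\langle G\cup H\rangle/\mathcal{I}$. The remaining ingredients, namely the minimum-distance squeeze $2t<2^l$ and the uniqueness of the monomial representation of $e$, are routine.
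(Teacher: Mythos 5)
Your proposal is correct and follows essentially the same route as the paper's own proof: assume $\omega(\overline{v}^G)\leq t$, use $\overline{v}^G=\overline{e}^G$ and $\overline{v}^G-e\in M^l$ together with the squeeze $2t<2^l=d_{\min}(M^l)$ to force $e=\overline{v}^G$, and then contradict the hypothesis $\card(I_\tau)\geq l$ because no term of the remainder is divisible by any $\lt(g_J)=X_J$ with $\card(J)=l$. Your only deviations are cosmetic: you obtain $\overline{v}^G-e\in M^l$ from $e-\overline{e}^G\in\langle G\rangle$ rather than via the paper's decomposition $v=c^\prime+\overline{v}^G$, and you spell out the support-cardinality contradiction (via uniqueness of the monomial representation in $\mathcal{A}$) that the paper leaves implicit.
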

\begin{proof}
 We have $v=c+e$. Then, $\overline{v}^G=\overline{e}^G$. On the other hand, we have $v=c^\prime+\overline{v}^G$ with $c^\prime\in M^l$. Thus, $c+e=c^\prime+\overline{v}^G$. So $\overline{v}^G-e=c-c^\prime\in M^l$. If $\omega(\overline{v}^G)\leq t$, then $\omega(\overline{v}^G-e)\leq\omega(\overline{v}^G)+\omega(\overline{e}^G)\leq 2t<2t+1\leq 2^l=d_{\min}(M^l)$. Thus, $\overline{v}^G-e=0$,  i.e $\overline{e}^G=\overline{v}^G=e$, a contradiction because no term of $\overline{e}^G$ is divisible by any of $\lt(g)$, $g\in G$.
\end{proof}
\begin{cor}
Let $I\subseteq E$ such that $\card(I)>l$, then $\omega(\overline{X_I}^G)>t$.
\end{cor}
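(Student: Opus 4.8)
The plan is to avoid reducing $X_I$ explicitly and instead derive the weight bound from the minimum distance $d_{\min}(M^l)=2^l$, by producing a single nonzero codeword of $M^l$ whose weight equals $1+\omega(\overline{X_I}^G)$. I would first record the shape of the remainder: writing $\overline{X_I}^G=\sum_{L\in\widehat{I}}X_L$ with $\widehat{I}\subseteq\mathcal{P}(I)$, the defining property of a remainder forces each monomial $X_L$ to be divisible by no leading term $\lt(g_J)=X_J$ with $\card(J)=l$. Since a squarefree monomial $X_L$ is divisible by some such $X_J$ precisely when $\card(L)\geq l$, every $L\in\widehat{I}$ satisfies $\card(L)<l$. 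Because $\card(I)>l$ by hypothesis, this already yields $I\notin\widehat{I}$; moreover each reduction step applied to a squarefree monomial keeps it squarefree, so $\overline{X_I}^G$ is in reduced form modulo $H$.

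Next I would pass to the quotient ring. By construction $X_I-\overline{X_I}^G\in\langle G\rangle\subseteq\langle G\cup H\rangle$, and since both $X_I$ and $\overline{X_I}^G$ are squarefree (hence $H$-reduced, in the sense of Remark \ref{rem5}), reduction modulo $\mathcal{I}$ gives an element $w:=x_I+\overline{X_I}^G(x)\in\langle G\cup H\rangle/\mathcal{I}=M^l$, where I use that $-$ and $+$ coincide over $\mathbb{F}_2$. Under the identification $\mathcal{A}\cong(\mathbb{F}_2)^{2^m}$ from Section 4, the weight of an element given in reduced squarefree form equals its number of monomials, exactly as already used above for $\omega(\overline{X_I}^G)=\card(\widehat{I})$. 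Since $I\notin\widehat{I}$ the high-degree monomial $x_I$ does not cancel against any term of $\overline{X_I}^G(x)$, so $w\neq 0$ and $\omega(w)=1+\card(\widehat{I})=1+\omega(\overline{X_I}^G)$.

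Finally the minimum distance closes the argument: a nonzero codeword of $M^l$ has weight at least $d_{\min}(M^l)=2^l$, so $1+\omega(\overline{X_I}^G)\geq 2^l$, that is $\omega(\overline{X_I}^G)\geq 2^l-1$. Since $t$ is the largest integer with $2t+1\leq 2^l$, we have $t\leq(2^l-1)/2<2^l-1$ for every $l\geq 1$, and therefore $\omega(\overline{X_I}^G)\geq 2^l-1>t$, as required.

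The step needing the most care is the weight bookkeeping $\omega(w)=1+\omega(\overline{X_I}^G)$: one must be sure that the single monomial $x_I$ cannot be absorbed into the low-degree remainder, and this is precisely where the hypothesis $\card(I)>l$ (against $\card(L)<l$ for all $L\in\widehat{I}$) is essential. It is also what rules out the tempting shortcut of invoking the preceding proposition with $c=0$ and $e=x_I$: that route requires $\omega(e)=1\leq t$, which fails when $t=0$ (the case $l=1$). Arguing directly from $d_{\min}(M^l)$ covers this degenerate case uniformly.
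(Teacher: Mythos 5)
Your proof is correct, but it is not the paper's argument. The paper disposes of this corollary in a single line: it applies the immediately preceding proposition with $v=e=X_I$ (so $c=0$, $k=1$, and $\card(I_1)=\card(I)\geq l$), concluding $\omega(\overline{X_I}^G)=\omega(\overline{e}^G)=\omega(\overline{v}^G)>t$ --- precisely the ``tempting shortcut'' you describe and reject. You instead construct the nonzero codeword $w=x_I+\overline{X_I}^G(x)\in M^l$ (using $X_I-\overline{X_I}^G\in\langle G\rangle$, the fact that division of a squarefree monomial by $G$ produces only squarefree monomials, and the identification $M^l=\langle G\cup H\rangle/\mathcal{I}$), and compare against the minimum distance $d_{\min}(M^l)=2^l$, which yields the sharper quantitative bound $\omega(\overline{X_I}^G)\geq 2^l-1>t$. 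Your cancellation bookkeeping is sound: every $L\in\widehat{I}$ has $\card(L)<l<\card(I)$ because no term of a remainder is divisible by any $\lt(g_J)=X_J$ with $\card(J)=l$, so $I\notin\widehat{I}$ and $\omega(w)=1+\card(\widehat{I})$ under the correspondence $\mathcal{A}\cong(\mathbb{F}_2)^{2^m}$. What your route buys is twofold: an explicit lower bound $2^l-1$ (matching the exact count $2^l-1$ the paper computes in the boundary case $\card(I)=l$), and coverage of the degenerate case $l=1$, $t=0$, where the paper's one-line deduction is strictly speaking not licensed, since the proposition it invokes assumes $\omega(e)\leq t$, i.e.\ $1\leq t$. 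So your argument is self-contained and slightly more robust than the paper's, at the cost of being longer; the paper's is shorter but silently inherits the hypothesis $t\geq 1$.
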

\begin{proof} If $e=X_I$, then $\omega(\overline{X_I}^G)=\omega(\overline{e}^G)=\omega(\overline{v}^G)>t$.
\end{proof}
\begin{thm}
 Let $v\in \mathcal{A}$ be a received vector wich contains at most $t$ errors where $t$ is the maximal integer such that $2t+1\leq 2^l$. Then $v$ can be decoded by the following algorithm :\\
 Input : \\
 $-$\  $v$\\
 $-$\  $G$, a reduced Groebner basis for $\langle G \rangle$\\
 $-$\  $\Omega=\{S\subseteq\mathcal{P}(\{1,\dots, m\})\slash\card(I)\geq l\ \  \text{for all}\ \  I\in S\}$\\
 Output : a codeword $c$\\
 BEGIN 
 
$-$ \  Compute $\overline{v}^G$

 $-$\  If $\omega(\overline{v}^G)\leq t$, then $c=v+\overline{v}^G$
 
 $-$\  Otherwise, find the element $S\in\Omega$ such that $\omega(\overline{v}^G-\overline{\displaystyle{\sum_{I\in S}X_I}}^G)\leq t -\card(S)$, 
 
 \hspace{0.5cm} then $c=v+\displaystyle{\sum_{I\in S}X_I}+\overline{v}^G-\displaystyle{\overline{\sum_{I\in S}X_I}}^G$.\\
 END
  \end{thm}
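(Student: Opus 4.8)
The plan is to treat the two branches of the algorithm separately, and in each case to exhibit the output $c$ as a codeword of $M^l$ whose distance from $v$ is at most $t$; the unique-decoding property (the minimum weight of $M^l$ is $d=2^l\geq 2t+1$) then forces $c$ to equal the transmitted word. Throughout I write $v=c_0+e$ with $c_0\in M^l$ the transmitted codeword and $\omega(e)\leq t$, say $e=\sum_{j=1}^{k}x_{I_j}$ with $k\leq t$ and the $I_j\subseteq E$ distinct, and I record from Proposition \ref{prop8} that $\overline{v}^G=\overline{c_0}^G+\overline{e}^G=\overline{e}^G$.

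First I would dispatch the branch $\omega(\overline{v}^G)\leq t$. By the earlier analysis relating $\omega(\overline{v}^G)$ to the error support, this weight condition forces every error position $I_j$ to satisfy $\card(I_j)<l$, whence $\overline{X_{I_j}}^G=X_{I_j}$ for each $j$ and therefore $\overline{v}^G=\overline{e}^G=e$. The assignment $c=v+\overline{v}^G=v+e$ then returns $c_0$ at once, using $v+e=c_0$ in characteristic $2$.

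The substance lies in the branch $\omega(\overline{v}^G)>t$, where two things must be established. The first is that a set $S\in\Omega$ meeting the search criterion exists: taking $S_0=\{\,I_j:\card(I_j)\geq l\,\}$, linearity of reduction (Remark \ref{rem1}) together with $\overline{X_{I_j}}^G=X_{I_j}$ for the small-support positions gives $\overline{v}^G-\overline{\sum_{I\in S_0}X_I}^G=\sum_{\card(I_j)<l}X_{I_j}$, a reduced polynomial of weight $k-\card(S_0)\leq t-\card(S_0)$; hence $S_0$ passes the test and the finite search over $\Omega$ terminates. The second, and the real crux, is that \emph{any} admissible $S$ reconstructs the transmitted word. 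Writing $w=\sum_{I\in S}X_I+\overline{v}^G-\overline{\sum_{I\in S}X_I}^G$ and $c=v+w$, I would first verify $c\in M^l$: by linearity and idempotence of reduction (so $\overline{\overline{v}^G}^G=\overline{v}^G$ by Proposition \ref{prop1}), one gets $\overline{w}^G=\overline{\sum_{I\in S}X_I}^G+\overline{v}^G-\overline{\sum_{I\in S}X_I}^G=\overline{v}^G$, hence $\overline{c}^G=\overline{v}^G+\overline{v}^G=0$ and the membership criterion for $M^l$ applies. Then, since $\sum_{I\in S}X_I$ is a sum of $\card(S)$ distinct reduced monomials and the remaining summand has weight $\leq t-\card(S)$ by the test, the triangle inequality yields $\omega(v-c)=\omega(w)\leq t$.

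Finally I would conclude: $c$ is a codeword of $M^l$ with $\omega(v-c)\leq t$, and so is $c_0$ with $\omega(v-c_0)=\omega(e)\leq t$; since $2t+1\leq 2^l=d_{\min}(M^l)$ there is at most one codeword within distance $t$ of $v$, forcing $c=c_0$. I expect the main obstacle to be exactly the second point of the hard branch, namely that \emph{every} admissible $S$, not just the genuine large-support part $S_0$, recovers the transmitted word; this is handled not by proving $S$ unique but by combining the weight bound $\omega(w)\leq t$ with unique decoding, so that all admissible choices collapse to the same codeword $c_0$.
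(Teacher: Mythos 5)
Your proof is correct, and it is actually more complete than what the paper itself supplies: the paper states this theorem with no proof at all, letting it rest implicitly on the preceding propositions (the identity $\overline{v}^G=\overline{e}^G$, the case $\omega(\overline{v}^G)\leq t$ forcing $e=\overline{v}^G$ with all error supports of size $<l$, and the converse statement that an error monomial with $\card(I_\tau)\geq l$ forces $\omega(\overline{v}^G)>t$). Your first branch coincides with that material. Where you genuinely go beyond the paper is the second branch: the paper's closest result is Proposition \ref{prop8}, which only treats the situation where $\overline{v}^G$ is literally the symmetric-difference expansion $\sum_{L\in\widehat{I_1}\Delta\cdots\Delta\widehat{I_k}}X_L$ of the true error, and it nowhere proves that the search over $\Omega$ succeeds, nor that an \emph{arbitrary} $S$ passing the weight test recovers the transmitted word. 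You settle both points: existence via $S_0=\{I_j\,:\,\card(I_j)\geq l\}$ and linearity of remainders (Remark \ref{rem1}), and correctness via $\overline{w}^G=\overline{v}^G$ (idempotence of reduction, from the uniqueness in Proposition \ref{prop1}), whence $\overline{c}^G=0$, membership $c\in M^l$, the bound $\omega(v-c)\leq\card(S)+(t-\card(S))=t$, and the collapse of every admissible output onto $c_0$ by $2t<2^l=d_{\min}(M^l)$. The engine — membership iff zero remainder, linearity of reduction, the minimum-distance bound — is the same as the paper's, but your organization as a unique-decoding argument is what actually certifies the algorithm as stated. One small point worth making explicit if you polish this: division of a squarefree polynomial by $G$ yields a squarefree remainder (each reduction step replaces $X_J$ by terms $X_{J\backslash I}X_L$ with $L\subseteq I\subseteq J$), so $\omega(w)$ really is the Hamming weight of $v-c$ as a vector in $(\mathbb{F}_2)^{2^m}$; the paper uses this silently via Remark \ref{rem5}.
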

 In the case of the Reed-Muller code $M^2$ which is a one error correcting code, we have a simple decoding algorithm. 
 \begin{cor}
 Consider the Reed-Muller code $M^2$. Let  $v\in(\mathbb{F}_2)^{2^m}$ be a received vector which contains at most one error. Denote $v(x)$ the polynomial in $\mathcal{A}$ corresponding to $v$. We have $v(x)=c(x)+e(x)$ with $c(x)\in M^2$ and $\omega(e)\leq 1$.\\
 $-$ \ If $\overline{v}^G=0$ then $v=c$.\\
 $-$ \ If $\overline{v}^G=\displaystyle{\sum_{i\in I}X_{i}}$ or $\overline{v}^G=\displaystyle{\sum_{i\in I}X_{i}}+1$ with $I\subseteq E$, then $e=\displaystyle{\prod_{i\in I}x_i}$.
\end{cor}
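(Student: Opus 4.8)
The plan is to reduce the entire corollary to a single explicit computation: the remainder $\overline{X_I}^G$ of an arbitrary squarefree monomial $X_I=\prod_{i\in I}X_i$ modulo the reduced Groebner basis $G$. Here $l=2$, so $G=\{(X_i-1)(X_j-1)/\,i<j\}$, each $\lt(g)$ is a degree-two product $X_iX_j$, and $t=1$. Since $M^2$ corrects one error, the error is either $e=0$ or a single monomial $e(x)=\prod_{i\in I}x_i=x_I$ for some $I\subseteq E$ (a weight-one word in $(\mathbb{F}_2)^{2^m}$ is one coordinate, i.e. one monomial with exponents in $\{0,1\}$). By Proposition \ref{prop8} the syndrome satisfies $\overline{v}^G=\overline{e}^G$, so it suffices to show that $\overline{X_I}^G$ both determines $I$ and takes exactly the two shapes listed in the statement.

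First I would establish the key formula
\[\overline{X_I}^G=\sum_{i\in I}X_i+\varepsilon_I,\qquad \varepsilon_I=\begin{cases}1&\text{if }\card(I)\text{ is even},\\0&\text{if }\card(I)\text{ is odd}.\end{cases}\]
To prove it, write $X_i=(X_i-1)+1$ and expand $X_I=\prod_{i\in I}\big((X_i-1)+1\big)=\sum_{L\subseteq I}g_L$, where $g_L=\prod_{i\in L}(X_i-1)$. For $\card(L)\geq 2$ the element $g_L$ is a multiple of some $(X_i-1)(X_j-1)\in G$, hence $g_L\in\langle G\rangle$; therefore, modulo $\langle G\rangle$, only the terms with $\card(L)\leq 1$ survive, giving $X_I\equiv g_\emptyset+\sum_{i\in I}g_{\{i\}}=1+\sum_{i\in I}(X_i-1)$. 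Reducing the coefficients over $\mathbb{F}_2$ yields $\sum_{i\in I}X_i+(1+\card(I))$, and since $1+\card(I)\equiv\varepsilon_I\pmod 2$ this is the displayed right-hand side. As this polynomial involves only the constant $1$ and the linear monomials $X_i$, none of which is divisible by any $\lt(g)=X_iX_j$, it is genuinely the remainder by the uniqueness in Proposition \ref{prop1}.

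Finally I would read off the decoding. If $\overline{v}^G=0$, then $\overline{e}^G=0$, so $e\in M^2$ with $\omega(e)\leq t=1<2^2=d_{\min}(M^2)$, forcing $e=0$ and $c=v$; this is the first case, already contained in Proposition \ref{prop8}. Otherwise $e=x_I\neq 0$ and $\overline{v}^G=\overline{X_I}^G=\sum_{i\in I}X_i+\varepsilon_I$: the linear part exhibits $I$ as the exact set of variables that occur, while the presence or absence of the constant $1$ records the parity $\varepsilon_I$ of $\card(I)$. Hence whenever $\overline{v}^G=\sum_{i\in I}X_i$ or $\overline{v}^G=\sum_{i\in I}X_i+1$, the index set $I$ is uniquely recovered and $e=\prod_{i\in I}x_i$, as claimed.

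The only real obstacle is the key formula itself, because the earlier Proposition evaluates $\overline{X_I}^G$ only for $\card(I)\leq l$; extending it to every $I$ is precisely what the expansion $X_I=\sum_{L\subseteq I}g_L$ settles in one stroke. A small point to watch is the degenerate case $I=\emptyset$, where $\card(I)=0$ is even: the corresponding error $e=1$ produces $\overline{v}^G=1$, i.e. the ``$+1$'' case with $I=\emptyset$, consistently with the fact that the purely linear syndromes $\sum_{i\in I}X_i$ can only arise for odd, hence nonempty, $I$. This confirms that the three syndrome shapes $0$, $\sum_{i\in I}X_i$ and $\sum_{i\in I}X_i+1$ are exhaustive and mutually exclusive, so the decoding is unambiguous.
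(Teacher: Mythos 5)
Your proof is correct, and it supplies more than the paper does: the paper prints this corollary \emph{without any proof}, leaving it as an implicit consequence of Proposition \ref{prop8} together with its description of $\widehat{I}$ --- which, however, is only computed there for $\card(I)<l$ and $\card(I)=l$; for $\card(I)>l$ the paper merely bounds $\omega(\overline{X_I}^G)>t$ and never determines the remainder itself. Your closed formula $\overline{X_I}^G=\sum_{i\in I}X_i+\varepsilon_I$, obtained from the expansion $X_I=\sum_{L\subseteq I}g_L$ together with the observation that $g_L\in\langle G\rangle$ whenever $\card(L)\geq 2$, and certified as the true remainder by the uniqueness in Proposition \ref{prop1}, is exactly the missing ingredient: it computes $\widehat{I}$ for \emph{every} $I$ when $l=2$, shows that the three syndrome shapes $0$, $\sum_{i\in I}X_i$ (with $\card(I)$ odd) and $\sum_{i\in I}X_i+1$ (with $\card(I)$ even) are exhaustive and mutually exclusive, and recovers $I$, hence $e=x_I$, unambiguously. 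The route the paper suggests would instead chain Proposition \ref{prop8} with the symmetric-difference calculus of Proposition \ref{prop5}; your single direct computation buys an explicit syndrome dictionary (linear part reads off $I$, constant term records the parity of $\card(I)$), makes the paper's worked example ($\overline{v}^G=X_2+X_3+1$, so $e=x_2x_3$) immediate, and handles the degenerate case $I=\emptyset$ (error $e=1$, syndrome $1$), a point the corollary's statement glosses over. No gaps: the reduction of the weight-$\leq 1$ error to a single monomial, the use of $\overline{v}^G=\overline{e}^G$, and the minimum-distance argument $\omega(e)\leq t=1<2^2=d_{\min}(M^2)$ in the case $\overline{v}^G=0$ are all properly justified from the paper's earlier results.
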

 \begin{ex}
Consider the Groebner basis $G=\{X_1X_2+X_1+X_2+1,\ X_1X_3+X_1+X_3+1,\ X_2X_3+X_2+X_3+1\}$ for $\langle G \rangle\subseteq\mathbb{F}_2[X_1,X_2,X_3]$. Let $v=(1,0,1,0,0,0,1,0)$ is a received vector. Since $v(x)=x_1x_2x_3+x_1x_3+x_3$, then $\overline{v}^G=X_2+X_3+1$. Thus $e=x_2x_3$ and we obtain the codeword $c=(1,0,1,0,0,0,1,0)+(0,0,0,0,1,0,0,0)=(1,0,1,0,1,0,1,0)\in M^2=\mathcal{C}_{3-2}(3,2)$.
 \end{ex}


\begin{thebibliography}{99}
\bibitem{zi5}
 W. Adams and P. Loustaunau, \emph{An Introduction to Groebner Bases}, American Mathematical Society, Vol.3, 1994.

 
\bibitem{Berman} S.D. Berman, \emph{On the theory of group codes}, kibernetika, 3(1), 31-39, 1967.
  
  
\bibitem{BlakeMullin} I.F. Blake and R.C. Mullin, \emph{The mathematical theory of coding}, Academic Preis, 1975.
  
  
\bibitem{Charpin} P. Charpin, \emph{Une g\'en\'eralisation de la constrution de Berman des codes de Reed et Muller p-aires}, Communications in algebra, 16, 2231-2246, 1988.
   
    
\bibitem{clo} D. Cox, J. Little and D. O'Shea, \emph{Ideals, Varieties, and Algorithms},  Springer, 1996.

 
\bibitem{Jennings} S.A. Jennings, \emph{The structure of the group ring of a p-group over a modular field}, Trans. Amer. Math. Soc., 50 : 175-185, 1941
 

\bibitem{Kasami} T. Kasami, S. Lin and W.W. Peterson, \emph{New generalizations of the Reed-Muller codes},  IEEE Transactions on information Theory, 14, n.2, 189-205, 1968.
  
 
\bibitem{LandrockManz} P. Landrock, O. Manz, \emph{Classical codes as ideals in group algebras},  Designs, Codes and Cryptography, 2, 273-285, 1992.
  
  
\bibitem{Poli} A. Poli, \emph{Codes stables sous le groupe des automorphismes isom\'etriques de\\ $A=\mathbb{F}_p[X_1,\dots,X_n]\slash(X_1^p-1,\dots,X_n^p-1)$}, C.R Acad\'emie des Sciences Paris, t.1980; 290: 1029-1032.
  

\bibitem{saleemi} M. Saleemi, \emph{Coding Theory via Groebner Bases}, Thesis, 2012.


\end{thebibliography}
\end{document}